\newtheorem{example}{Example}
\newtheorem{remark}{Remark}
\newtheorem{prop}{Proposition}
\newtheorem{thm}{Theorem}
\newtheorem{cor}{Corollary}
\newtheorem{definition}{Definition}
\newcommand{\Domain}{\mathcal{X}}
\newcommand{\ul}{\underline}
\newcommand{\ol}{\overline}
\newcommand{\Kinf}{\mathcal{K}_\infty}
\newcommand*\wc{{\mkern 2mu\cdot\mkern 2mu}}
\newcommand{\X}{\mathcal{X}}
\renewcommand\footnotemark{}
\begin{document}
\title{A Contractive Approach to Separable Lyapunov Functions for Monotone Systems}
\thanks{School of Electrical and Computer Engineering and School of Civil and Environmental Engineering\\Georgia Institute of Technology. \texttt{sam.coogan@gatech.edu}. }
\author{Samuel Coogan} 
\date{}
\maketitle

\begin{abstract}
Monotone systems preserve a partial ordering of states along system trajectories and are often amenable to separable Lyapunov functions that are either the sum or the maximum of a collection of functions of a scalar argument. In this paper, we consider constructing separable Lyapunov functions for monotone systems that are also contractive, that is, the distance between any pair of trajectories exponentially decreases. The distance is defined in terms of a possibly state-dependent norm. When this norm is a weighted one-norm, we obtain conditions which lead to sum-separable Lyapunov functions, and when this norm is a weighted infinity-norm, symmetric conditions lead to max-separable Lyapunov functions. In addition, we consider two classes of Lyapunov functions: the first class is separable along the system's state, and the second class is separable along components of the system's vector field. The latter case is advantageous for many practically motivated systems for which it is difficult to measure the system's state but easier to measure the system's velocity or rate of change. In addition, we present an algorithm based on sum-of-squares programming to compute such separable Lyapunov functions. We provide several examples to demonstrate our results.

\end{abstract}

\section{Introduction}

If a dynamical system maintains a partial ordering of states along trajectories of the system, it is said to be \emph{monotone} \cite{Hirsch:1983lq,Hirsch:1985fk, Smith:2008fk}. Large classes of physically motivated systems have been shown to be monotone including biological networks \cite{Sontag:2007ad} and transportation networks \cite{Gomes:2008fk, Lovisari:2014yq, coogan2015compartmental}.  Monotone systems exhibit structure and ordered behavior that is exploited for analysis and control \emph{e.g.}, \cite{Angeli:2003fv, Angeli:2004qy, Rantzer:2012fj, Dirr:2015rt}.  

%
%

It has been observed that monotone systems are often amenable to separable Lyapunov functions for stability analysis. In particular, classes of monotone systems have been identified that allow for Lyapunov functions that are the sum or maximum of a collection of functions of a scalar argument \cite{Rantzer:2013bf, Dirr:2015rt, Sootla:2016sp}. In the case of linear monotone systems, also called \emph{positive} systems, such sum-separable and max-separable global Lyapunov functions are always possible when the origin is an asymptotically stable equilibrium \cite{Rantzer:2012fj}. 

On the other hand, if the distance between states along any pair of trajectories is exponentially decreasing, a dynamical system is said to be \emph{contractive} \cite{Pavlov:2004lr, LOHMILLER:1998bf, Sontag:2010fk,Forni:2012qe}. If a contractive system has an equilibrium, then the equilibrium is globally asymptotically stable and a Lyapunov function is given by the distance to the equilibrium. 

Certain classes of monotone systems have been shown to be also contractive with respect to non-Euclidean norms. For example, \cite{Margaliot:2012hc, Margaliot:2014qv, Raveh:2015wm} study a model for gene translation which is monotone and contractive with respect to a weighted $\ell_1$ norm. A closely related result is obtained for transportation flow networks in \cite{Coogan:2014ph, Como:2015ne}. In \cite{Coogan:2014ph}, a Lyapunov function defined as the magnitude of the vector field is used, and in \cite{Como:2015ne}, a Lyapunov function based on the distance of the state to the equilibrium is used.

In this paper, we establish sufficient conditions for constructing  sum-separable and max-separable Lyapunov functions for monotone systems by appealing to contraction theory. In particular, we study monotone systems that are contractive with respect to a possibly state-dependent, weighted $\ell_1$ norm, which leads to sum-separable Lyapunov functions, or weighted $\ell_\infty$ norm, which leads to max-separable Lyapunov functions. We first provide sufficient conditions establishing contraction for monotone systems in terms of negativity of scaled row or column sums of the Jacobian matrix for the system where the scaling may be state-dependent.

In addition to deriving Lyapunov functions that are separable along the state of the system, we also introduce Lyapunov functions that are separable along components of the vector field. This is especially relevant for certain classes of systems such as multiagent control systems or flow networks where it is often more practical to measure velocity or flow rather than position or state.   Additionally, we present results of independent interest for proving asymptotic stability and obtaining Lyapunov functions of systems that are \emph{nonexpansive} with respect to a particular vector norm, \emph{i.e.}, the distance between states along any pair of trajectories does not increase. Finally, we draw connections between our results and related results, particularly small-gain theorems for interconnected input-to-state stable (ISS) systems.

The present paper significantly generalizes results previously reported in \cite{Coogan:2016kx}, which only considered constant norms over the state-space. Thus, the theorems of \cite{Coogan:2016kx} are presented here as corollaries to our main results. In addition to deriving separable Lyapunov functions for a class of monotone systems, we present explicit conditions for establishing contractive properties for nonautonomous, nonlinear systems with respect to state-dependent, non-Euclidean metrics. These conditions require that the matrix measure of a suitably defined generalized Jacobian matrix be uniformly negative definite and rely on the theory of Finsler-Lyapunov functions \cite{Forni:2012qe}.  

The recent work \cite{Manchester:2017wc} also seeks to determine when a contracting monotone system has a separable contraction metric. Unlike the present work, which focuses on norms that are naturally separable, namely, the $\ell_1$ and $\ell_\infty$ norm, \cite{Manchester:2017wc} considers Riemannian metrics and studies when a contracting system also has a separable Riemannian contraction metric.


This paper is organized as follows. Section \ref{sec:notation} defines notation and Section \ref{sec:problem-setup} provides the problem setup.  Section \ref{sec:main-results} contains the statements of our main results. Before proving these results, we review contraction theory for general, potentially time-varying nonlinear systems in Section \ref{sec:contr-with-resp}. 
In applications, it is often the case that the system dynamics are not contractive everywhere, but are nonexpansive with respect to a particular norm, \emph{i.e.}, the distance between any pair of trajectories does not increase for all time. In Section \ref{sec:glob-asympt-stab}, we provide a sufficient condition for establishing global asymptotic stability for nonexpansive systems. 

In Section \ref{sec:proof-main-result}, we provide the proofs of our main results, and in Section \ref{sec:an-algor-comp}, we use our main results to establish a numerical algorithm for searching for separable Lyapunov functions using sum-of-squares programming. We provide several applications of our results in Section \ref{sec:examples}. Next, we discuss how the present work relates to results for interconnected ISS systems and generalized contraction theory in Section \ref{sec:disc-comp-exist}. Section \ref{sec:conclusions} contains concluding remarks.

\section{Preliminaries}
\label{sec:notation}

For $x,y\in\mathbb{R}^n$, we write $x\leq y$ if the inequality holds elementwise, and likewise for $<$, $\geq$ and $>$. Similarly, $x>0$  ($x\geq 0$)  means all elements of $x$ are positive (nonegative), and likewise for $x<0$ or $x\leq 0$. For symmetric $X,Y\in\mathbb{R}^{n\times n}$, $X\succ 0$ (respectively, $X\succeq 0$) means $X$ is positive definite (respectively, semidefinite), and $X\succ Y$ (respectively, $X\succeq Y$) means $X-Y\succ 0$ (respectively, $X-Y\succeq 0$). For $\X\subseteq \mathbb{R}^n$, the matrix-valued function $\Theta:\X\to\mathbb{R}^{n\times n}$ is \emph{uniformly positive definite on $\X$} if $\Theta(x)=\Theta(x)^T$ for all $x\in\X$ and there exists $\alpha >0$ such that $\Theta(x)\succeq \alpha I$ for all $x\in\X$ where $I$ denotes the identity matrix and $T$ denotes transpose.
%

The vector of all ones is denoted by $\mathbf{1}$. For functions of one variable, we denote derivative with the prime notation $'$. The $\ell_1$ and $\ell_\infty$ norms are denote by $|\cdot|_1$ and $|\cdot|_\infty$, respectively, that is, $|x|_1=\sum_{i=1}^n|x_i|$ and $|x|_\infty=\max_{i=1,\ldots,n}|x_i|$ for $x\in\mathbb{R}^n$.

Let $|\wc|$ be some vector norm on $\mathbb{R}^n$ and let $\|\wc \|$ be its induced matrix norm on $\mathbb{R}^{n\times n}$. The corresponding \emph{matrix measure} of the matrix $A\in\mathbb{R}^{n\times n}$ is defined as (see, \emph{e.g.}, \cite{Desoer:2008bh})
\begin{align}
  \label{eq:5}
  \mu(A):= \lim_{h\to 0^+}\frac{\|I+hA\|-1}{h}.
\end{align}

One useful property of the matrix measure is that $\mu(A)<0$ implies $A$ is Hurwitz \cite{Vidyasagar:2002ly}. 
For the $\ell_1$ norm, %
the induced matrix measure is given by
\begin{align}
  \label{eq:56}
  \textstyle \mu_1(A)=\max_{j=1,\ldots,n}\left(A_{jj}+\sum_{i\neq j}|A_{ij}|\right)
\end{align}
for any $A\in\mathbb{R}^{n\times n}$. Likewise, for the $\ell_\infty$ norm, %
the induced matrix measure is given by
\begin{align}
  \label{eq:58}
  \textstyle   \mu_\infty(A)=\max_{i=1,\ldots,n}\left(A_{ii}+\sum_{j\neq i}|A_{ij}|\right).
\end{align}
See, \emph{e.g.}, \cite[Section II.8, Theorem 24]{Desoer:2008bh}, for a derivation of the induced matrix measures for common vector norms.

A matrix $A\in\mathbb{R}^{n\times n}$ is \emph{Metzler} if all of its off diagonal components are nonnegative, that is, $A_{ij}\geq 0$ for all $i\neq j$.  When $A$ is Metzler, %
\eqref{eq:56} and \eqref{eq:58} reduce to%
\begin{align}
  \label{eq:10}
  \mu_1(A)&\textstyle =\max_{j=1,\ldots,n}\sum_{i=1}^nA_{ij},\\
  \label{eq:10-2}  \mu_\infty(A)&\textstyle =\max_{i=1,\ldots,n}\sum_{j=1}^nA_{ij},%
\end{align}
that is, $\mu_1(A)$ is the largest column sum of $A$ and $\mu_{\infty}(A)$ is the largest row sum of $A$.

\section{Problem Setup}
\label{sec:problem-setup}

Consider the dynamical system
\begin{align}
  \label{eq:1}
  \dot{x}=f(x)
\end{align}
for $x\in\Domain\subseteq \mathbb{R}^n$ where %
$f(\cdot)$ is continuously differentiable. Let $f_i(x)$ indicate the $i$th component of $f$ and let  $J(x)= \frac{\partial f}{\partial x}(x)$ be the Jacobian matrix of $f$.

Throughout, when $\mathcal{X}$ is not an open set (\emph{e.g.}, $\X$ is the positive orthant as in Example \ref{ex:statedep} below), we understand continuous differentiability of a function on $\mathcal{X}$ to mean that  the function can be extended to a continuously differentiable function on some open set containing $\X$, and we further consider $\X$ to be equipped with the induced topology.


%
%
 
%
%

Denote by $\phi(t,x_0)$ the solution to \eqref{eq:1} at time $t$ when the system is initialized with state $x_0$ at time $0$. We assume that \eqref{eq:1} is forward complete and $\Domain$ is forward invariant for \eqref{eq:1} so that $\phi(t,x_0)\in \Domain$ for all $t\geq 0$ and all $x_0\in\Domain$.  %

Except in Sections \ref{sec:contr-with-resp} and \ref{sec:glob-asympt-stab}, we assume \eqref{eq:1} is monotone \cite{Smith:2008fk, Angeli:2003fv}:
\begin{definition}
\label{def:mon}
The system \eqref{eq:1} is \emph{monotone} if the dynamics maintain a partial order on solutions, that is,
\begin{align}
  \label{eq:3}
  x_0\leq y_0 \implies \phi(t,x_0)\leq \phi(t,y_0) \quad \forall t\geq 0
\end{align}  
for any $x_0,y_0\in\Domain$.
\end{definition}
In this paper, monotonicity is defined with respect to the positive orthant since inequalities are interpreted componentwise, although it is common to consider monotonicity with respect to other cones \cite{Angeli:2003fv}.

The following proposition characterizes monotonicity in terms of the Jacobian matrix $J(x)$. 

\begin{prop}[{\emph{Kamke Condition}, \cite[Ch. 3.1]{Smith:2008fk}}]
\label{prop:mono}
Assume $\Domain$ is convex. Then the system \eqref{eq:1} is monotone if and only if the Jacobian $J(x)$ is Metzler for all $x\in\Domain$.  
\end{prop}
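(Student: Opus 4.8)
The plan is to prove the Kamke condition in both directions, relying on the variational equation associated to \eqref{eq:1}. Recall that if $\phi(t,x_0)$ is the solution and we differentiate with respect to the initial condition, the sensitivity matrix $\Phi(t,x_0):=\frac{\partial \phi}{\partial x_0}(t,x_0)$ satisfies the linear time-varying matrix ODE $\dot{\Phi}(t,x_0)=J(\phi(t,x_0))\,\Phi(t,x_0)$ with $\Phi(0,x_0)=I$. This is the standard object linking monotonicity of the flow to sign structure of the Jacobian, and it is the engine of the whole argument.

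For the ``if'' direction, assume $J(x)$ is Metzler for all $x\in\X$ and let $x_0\le y_0$. First I would reduce, by convexity of $\X$ and the mean value theorem along the segment from $x_0$ to $y_0$, to showing that $\Phi(t,x)$ has nonnegative entries for all $t\ge 0$ and $x\in\X$: indeed $\phi(t,y_0)-\phi(t,x_0)=\left(\int_0^1 \Phi_{\text{arc}}\,ds\right)(y_0-x_0)$ where the integrand is a sensitivity matrix along a point of the segment, so elementwise nonnegativity of all such matrices times the nonnegative vector $y_0-x_0$ gives $\phi(t,x_0)\le\phi(t,y_0)$. To get nonnegativity of $\Phi$, I would use that a Metzler matrix $A$ satisfies $e^{hA}\ge 0$ entrywise for $h\ge 0$ (since $A+\lambda I\ge 0$ for $\lambda$ large, and $e^{hA}=e^{-h\lambda}e^{h(A+\lambda I)}$), and then a Peano--Baker / comparison argument for the LTV system $\dot\Phi=J(\phi(t,\cdot))\Phi$: splitting $[0,t]$ into small steps and noting each step's transition matrix is approximately $e^{h J}\ge 0$, the limit of products of nonnegative matrices is nonnegative. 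Equivalently, one invokes the standard fact that the transition matrix of $\dot z = A(t) z$ is nonnegative for all $t\ge 0$ whenever $A(t)$ is Metzler for all $t$.

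For the ``only if'' direction, I would argue by contradiction: suppose $J_{ij}(x^\ast)<0$ for some $x^\ast\in\X$ and some $i\ne j$. By continuity there is a neighborhood and a short time on which a suitably chosen ordered pair of initial conditions separated only in the $j$th coordinate produces, to first order, a strictly negative $i$th component of the difference $\phi(t,y_0)-\phi(t,x_0)$ for small $t>0$, contradicting \eqref{eq:3}. Concretely, take $y_0=x^\ast+\epsilon e_j$ and $x_0=x^\ast$ (perturbing $x^\ast$ slightly into $\X$ if it is on the boundary, using the induced-topology / extendability convention); then $\phi(t,y_0)_i-\phi(t,x_0)_i=\epsilon\, \Phi_{ij}(t,x^\ast)+o(\epsilon)$ and $\Phi_{ij}(t,x^\ast)=t J_{ij}(x^\ast)+o(t)<0$ for small $t$, so the $i$th component of the difference becomes negative — violating monotonicity.

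The main obstacle is the ``if'' direction: making rigorous that the transition matrix of the variational (LTV) system inherits entrywise nonnegativity from the Metzler property of $J$ along the trajectory, handling the fact that $J(\phi(t,x_0))$ is only continuous in $t$ (not constant), and dealing cleanly with the case where $\X$ is not open — there one must ensure the segment from $x_0$ to $y_0$ stays in $\X$ (guaranteed by convexity) and that the extended-to-open-set convention lets us differentiate. The ``only if'' direction is comparatively routine first-order perturbation analysis. I would cite \cite[Ch. 3.1]{Smith:2008fk} for the nonnegativity-of-transition-matrix lemma rather than reprove it, since the proposition is stated as a known result.
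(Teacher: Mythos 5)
The paper does not prove this proposition at all: it is stated as a known result and attributed to \cite[Ch.~3.1]{Smith:2008fk}, so there is no in-paper argument to compare against. Your proof is correct and is one of the two standard routes. Where Smith's cited argument proceeds through the quasimonotonicity (``type $K$'' / Kamke) condition on $f$ itself --- if $x\leq y$ and $x_i=y_i$ then $f_i(x)\leq f_i(y)$ --- and establishes order preservation by perturbing to systems with strict inequalities and passing to the limit, you instead work at the level of the variational equation: convexity gives $\phi(t,y_0)-\phi(t,x_0)=\bigl(\int_0^1 \Phi(t,x_0+s(y_0-x_0))\,ds\bigr)(y_0-x_0)$, and the Metzler property of $J$ along trajectories makes each sensitivity matrix entrywise nonnegative. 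This buys a proof that sits naturally alongside the paper's contraction machinery (the same variational system \eqref{eq:73}--\eqref{eq:73-2} appears in the proof of Proposition~2), at the cost of needing the nonnegativity of the transition matrix of a time-varying Metzler system, which you correctly identify as the technical crux and sensibly propose to cite. Two small points to tighten: in the ``only if'' direction the order of quantifiers matters --- fix $t$ small enough that $\Phi_{ij}(t,x^\ast)=tJ_{ij}(x^\ast)+o(t)<0$ first, then shrink $\epsilon$ so the $o(\epsilon)$ remainder (which depends on $t$) is dominated; and the perturbation $x^\ast+\epsilon e_j$ genuinely requires that one can move in the $e_j$ direction inside $\X$, which holds for the domains used in this paper (products of intervals, orthants) but is where degenerate convex domains would need the extendability convention you invoke.
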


Here, we are interested in certifying stability of an equilibrium $x^*$ for the dynamics \eqref{eq:1}. To that end, we have the following definition of Lyapunov function, the existence of which implies asymptotic stability of $x^*$.

\begin{definition}[{\cite[p. 219]{Sontag:1998cr}}]
\label{def:lyap}
Let $x^*$ be an equilibrium of \eqref{eq:1}.  A continuous function $V:\Domain\to\mathbb{R}$ is a \emph{(local) Lyapunov function} for \eqref{eq:1} with respect to $x^*$ if on some neighborhood $\mathcal{O}$ of $x^*$ the following hold:
  \begin{enumerate}[label=(\roman*)]
  \item $V(x)$ is proper at $x^*$, that is, for small enough $\epsilon>0$, $\{x\in\Domain|V(x)\leq \epsilon\}$ is a compact subset of $\mathcal{O}$;
\item $V(x)$ is positive definite on $\mathcal{O}$, that is, $V(x)\geq 0$ for all $x\in\mathcal{O}$ and $V(x)=0$ if and only if $x=x^*$;
\item For any $x_0\in \mathcal{O}$, $x_0\neq x^*$, there is some time $\tau>0$ such that $V(\phi(\tau,x_0))<V(x_0)$ and $V(\phi(t,x_0))\leq V(x_0)$ for all $t\in(0,\tau]$.
  \end{enumerate}
Furthermore, $V$ is a global Lyapunov function for \eqref{eq:1} if we may take $\mathcal{O}=\mathcal{X}$ and $V(x)$ is globally proper, that is, for each $L>0$, $\{x\in\Domain|V(x)\leq L\}$ is compact.
\end{definition}

%

In this paper, we are particularly interested in Lyapunov functions defined using nondifferentiable norms, thus we rely on the definition above which only requires $V(x)$ to be continuous. Such nondifferentiable Lyapunov functions will not pose additional technical challenges since we will not rely on direct computation of the gradient of $V(x)$.  Instead, we will construct locally Lipschitz continuous Lyapunov functions and bound the time derivative evaluated along trajectories of the system, which exists for almost all time. Nonetheless, classical Lyapunov theory, which verifies condition (3) of the above definition by requiring $(\partial V/\partial x)\cdot f(x)<0$ for all $x\neq x^*$, is extended to such nondifferentiable Lyapunov functions with the use of generalized derivatives \cite{Clarke:1990kx}. %

Note that when $\Domain=\mathbb{R}^n$, a Lyapunov function is globally proper (and hence a global Lyapunov function) if and only if it is radially unbounded \cite[p. 220]{Sontag:1998cr}.

We call the Lyapunov function $V(x)$ \emph{agent sum-separable} if it decomposes as
\begin{align}
  \label{eq:61}
  V(x)=\sum_{i=1}^nV_i(x_i,f_i(x))
\end{align}
for a collection of functions $V_i$, and \emph{agent max-separable} if it decomposes as
\begin{align}
  \label{eq:62}
  V(x)=\max_{i=1,\ldots,n}V_i(x_i,f_i(x)).
\end{align}
If each $V_i$ in \eqref{eq:61} (respectively, \eqref{eq:62}) is a function only of $x_i$, we further call $V(x)$ \emph{state sum-separable} (respectively, \emph{state max-separable}). On the other hand, if each $V_i$ is a function only of $f_i(x)$, we say $V(x)$ is \emph{flow state-separable} (respectively, \emph{flow max-separable}).

Our objective is to construct separable Lyapunov functions for a class of monotone systems.

\section{Main Results}
\label{sec:main-results}
In the main result of this paper, we provide conditions for certifying that a monotone system possesses a globally asymptotically stable equilibrium. These conditions then lead to easily constructed separable Lyapunov functions. As discussed in Section \ref{sec:contr-with-resp}, this result relies on establishing that the monotone system is contractive with respect to a possibly state-dependent norm. We will first state our main result in this section before developing the theory required for its proof in the sequel.%

\begin{definition}
  The set $\mathcal{X}\subseteq \mathbb{R}^n$ is \emph{rectangular} if %
 there exists a collection of connected sets (\emph{i.e.}, intervals) $\Domain_i\subseteq \mathbb{R}$ for $i=1,\ldots, n$ such that $\Domain=\Domain_1\times \cdots \times \Domain_n$.

\end{definition}

\begin{thm}
\label{thm:tvmain1}
Let \eqref{eq:1} be a monotone system with rectangular domain $\mathcal{X}=\prod_{i=1}^n\X_i$, and let $x^*\in\X$ be an equilibrium for \eqref{eq:1}. If there exists a collection of continuously differentiable functions $\theta_i:\Domain_i\to\mathbb{R}$ for $i=1,\ldots,n$ such that for some $c>0$, $\theta_i(x_i)\geq c$ for all $x_i\in \X_i$ and for all $i$, and 
\begin{align}
  \label{eq:111}
    \theta(x)^TJ(x)+\dot{\theta}(x)^T&\leq 0  \quad \forall x\in\mathcal{X},\\
\label{eq:114} \theta(x^*)^TJ(x^*)&< 0,
\end{align}
where $\theta(x)=\begin{bmatrix}\theta_1(x_1) &\theta_2(x_2)&\ldots&\theta_n(x_n)\end{bmatrix}^T$, 
 %
%
then $x^*$ is globally asymptotically stable. %
Furthermore,
\begin{align}
  \label{eq:113}
\sum_{i=1}^n\left|\int_{x_i^*}^{x_i}\theta_i(\sigma)d\sigma\right|
\end{align}
is a global Lyapunov function, and
\begin{align}
  \label{eq:113-2} \sum_{i=1}^n\theta_i(x_i)|f_i(x)|
\end{align}
is a local Lyapunov function. If \eqref{eq:113-2} is globally proper, then it is also a global Lyapunov function.

\end{thm}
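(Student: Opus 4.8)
The plan is to pass to coordinates in which the state‑dependent weighted $\ell_1$ norm attached to $\theta$ becomes the \emph{constant} $\ell_1$ norm, and then to invoke the contraction/nonexpansiveness theory of Sections~\ref{sec:contr-with-resp}--\ref{sec:glob-asympt-stab}. Concretely, for each $i$ define $z_i=g_i(x_i):=\int_{x_i^*}^{x_i}\theta_i(\sigma)\,d\sigma$. Since $g_i'=\theta_i\ge c>0$ on $\X_i$, each $g_i$ is a $C^1$ diffeomorphism of the interval $\X_i$ onto an interval, with $g_i(x_i^*)=0$ and $|g_i(x_i)|\ge c\,|x_i-x_i^*|$. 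Hence $g=(g_1,\dots,g_n)$ is a diffeomorphism of the rectangular set $\X$ onto the rectangular set $g(\X)=\prod_i g_i(\X_i)$, carrying $x^*$ to the origin and \eqref{eq:1} to $\dot z=h(z)$ with $h_i(z)=\theta_i(x_i)f_i(x)$, $x=g^{-1}(z)$; forward completeness and forward invariance are inherited. In these coordinates \eqref{eq:113} is exactly $|z|_1$ and \eqref{eq:113-2} is exactly $|h(z)|_1$. (Equivalently, one is working with the separable Finsler structure $\delta x\mapsto\sum_i\theta_i(x_i)|\delta x_i|$, whose induced path‑length distance from $x^*$ separates as the sum \eqref{eq:113} precisely because the weight of coordinate $i$ depends only on $x_i$, so length‑minimizing paths are coordinate‑monotone.)

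\textbf{Step 1: structure of the transformed Jacobian.} Because $x_i$ depends only on $z_i$ with $\partial x_i/\partial z_i=1/\theta_i(x_i)$, a short computation gives, for $\tilde J(z):=\frac{\partial h}{\partial z}(z)$,
\[
\tilde J_{ij}=\frac{\theta_i(x_i)}{\theta_j(x_j)}J_{ij}(x)\ \ (i\neq j),\qquad
\tilde J_{ii}=J_{ii}(x)+\frac{\theta_i'(x_i)f_i(x)}{\theta_i(x_i)} .
\]
Since \eqref{eq:1} is monotone and the rectangular $\X$ is convex, $J(x)$ is Metzler by Proposition~\ref{prop:mono}, hence $\tilde J(z)$ is Metzler. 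Its $j$th column sum is therefore
\[
\sum_{i=1}^{n}\tilde J_{ij}(z)=\frac{1}{\theta_j(x_j)}\Big(\big(\theta(x)^{T}J(x)\big)_j+\dot\theta_j(x)\Big),
\]
where $\dot\theta_j(x)=\theta_j'(x_j)f_j(x)$. By \eqref{eq:10} and \eqref{eq:111} this yields $\mu_1(\tilde J(z))\le 0$ for all $z$; and since $f(x^*)=0$ forces $\dot\theta(x^*)=0$, condition \eqref{eq:114} gives $\mu_1(\tilde J(0))<0$, so by continuity $\mu_1(\tilde J(z))<0$ on a neighborhood of the origin.

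\textbf{Step 2: conclude.} The bound $\mu_1(\tilde J)\le0$ everywhere says $\dot z=h(z)$ is nonexpansive with respect to $|\cdot|_1$, while $\mu_1(\tilde J)<0$ near the origin says it is contractive there, and $|z|_1\ge c\,|x-x^*|_1$ makes the candidate globally proper. Invoking the global asymptotic stability result for nonexpansive systems of Section~\ref{sec:glob-asympt-stab} then gives that the origin is globally asymptotically stable for $\dot z=h(z)$ with global Lyapunov function $|z|_1$; translating back, $x^*$ is globally asymptotically stable and \eqref{eq:113} is a global Lyapunov function for \eqref{eq:1}. For \eqref{eq:113-2}, note that along any solution $z(t)$ the velocity $h(z(t))$ solves the variational equation $\dot w=\tilde J(z(t))w$ (differentiate $\dot z=h(z)$ in $t$); hence $t\mapsto|h(z(t))|_1$ is nonincreasing by nonexpansiveness and decays exponentially once the trajectory enters the neighborhood where $\mu_1(\tilde J)<0$, and since $\tilde J(0)$ is Hurwitz one has $|h(z)|_1\ge c'|z|_1$ near the origin, giving local properness and positive definiteness (global asymptotic stability makes $x^*$ the unique equilibrium, so $h(z)=0\iff z=0$). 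This proves \eqref{eq:113-2} is a local Lyapunov function; if it is in addition globally proper, then combining global nonexpansiveness of $|h(z(t))|_1$ with convergence of every trajectory to $x^*$ upgrades it to a global Lyapunov function.

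\textbf{Main obstacle.} The genuinely nontrivial input is the implication ``nonexpansive everywhere, contractive near the equilibrium $\Rightarrow$ global asymptotic stability with the distance to the equilibrium as Lyapunov function'' — this is exactly the content that Sections~\ref{sec:contr-with-resp}--\ref{sec:glob-asympt-stab} must supply (Finsler--Lyapunov theory together with a LaSalle‑type argument to handle the degeneracy where the nonexpansive estimate is only nonstrict). The remaining points are routine: using upper Dini derivatives to differentiate the nonsmooth $\ell_1$ norm along trajectories — where the Metzler/monotone structure is precisely what makes the sign pattern cooperate — and verifying compactness (not merely boundedness) of sublevel sets when $\X\subsetneq\mathbb{R}^n$.
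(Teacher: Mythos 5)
Your proof is correct and takes essentially the same route as the paper: you verify that the generalized Jacobian $\dot\Theta\Theta^{-1}+\Theta J\Theta^{-1}$ is Metzler with column sums $\theta_j^{-1}\big((\theta^TJ)_j+\dot\theta_j\big)$, deduce $\mu_1\le 0$ everywhere and $\mu_1<0$ at $x^*$, and invoke the nonexpansive-plus-local-contraction stability result of Section~\ref{sec:glob-asympt-stab}. Your explicit change of coordinates $z_i=\int_{x_i^*}^{x_i}\theta_i(\sigma)\,d\sigma$ is just a clean repackaging of the paper's proposition computing the induced distance for diagonal state-dependent $\Theta$, so the two arguments coincide in substance.
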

Above, $\dot{\theta}(x)$ is shorthand for 
\begin{align}
  \label{eq:163}
\dot{\theta}(x)=\frac{\partial \theta}{\partial x}f(x)=%
\begin{bmatrix}
  \theta_1'(x_1)f_1(x)&\ldots\    \theta_n'(x_n)f_n(x)
\end{bmatrix}^T.  
\end{align}

\begin{thm}
  \label{thm:tvmain2}
Let \eqref{eq:1} be a monotone system with rectangular domain $\mathcal{X}=\prod_{i=1}^n\X_i$, and let $x^*\in\X$ be an equilibrium for \eqref{eq:1}. If there exists a collection of continuously differentiable functions $\omega_i:\Domain_i\to\mathbb{R}$ for $i=1,\ldots,n$ such that for some $c>0$, $0<\omega_i(x_i)\leq c$ for all $x_i\in\X_i$ and for all $i$, and
\begin{align}
  \label{eq:45}  J(x)\omega(x)-\dot{\omega}(x)&\leq 0 \quad \forall x\in\mathcal{X},\\
  \label{eq:48} J(x^*)\omega(x^*)&< 0,
\end{align}
where $\omega(x)=\begin{bmatrix}\omega_1(x_1) &\omega_2(x_2)&\ldots&\omega_n(x_n)\end{bmatrix}^T$, 


then $x^*$ is globally asymptotically stable.
Furthermore, %
\begin{align}
  \label{eq:47}
    \max_{i=1,\ldots,n}\left|\int_{x_i^*}^{x_i}\frac{1}{\omega_i(\sigma)}d\sigma\right|
\end{align}
is a global Lyapunov function and 
\begin{align}
\label{eq:47-2} \max_{i=1,\ldots,n}\frac{1}{\omega_i(x_i)}|f_i(x)|
\end{align}
is a local Lyapunov function. If \eqref{eq:47-2} is globally proper, then it is also a global Lyapunov function.
\end{thm}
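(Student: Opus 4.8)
The plan is to realize $x^*$ as the unique equilibrium of a system that is \emph{nonexpansive} with respect to the state-dependent weighted infinity norm $\|\delta x\|_x := |\Theta(x)\delta x|_\infty = \max_{i}|\delta x_i|/\omega_i(x_i)$, where $\Theta(x):=\operatorname{diag}(1/\omega_1(x_1),\ldots,1/\omega_n(x_n))$, and then to read off \eqref{eq:47} and \eqref{eq:47-2} as the resulting Lyapunov functions. Since $0<\omega_i(x_i)\leq c$ we have $\Theta(x)\succeq \tfrac1c I$, so $\Theta(\cdot)$ is uniformly positive definite on $\mathcal{X}$ and $\|\cdot\|_x$ is uniformly equivalent to the Euclidean norm; this is what lets convergence and compactness statements pass between the two norms.

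The core step is a computation with the generalized Jacobian associated with the scaling $\Theta(x)$, namely $F(x):=\dot\Theta(x)\Theta(x)^{-1}+\Theta(x)J(x)\Theta(x)^{-1}$. Using $\tfrac{d}{dt}\tfrac{1}{\omega_i(x_i)}=-\dot\omega_i(x)/\omega_i(x_i)^2$, one finds $F_{ij}(x)=\tfrac{\omega_j(x_j)}{\omega_i(x_i)}J_{ij}(x)$ for $i\neq j$ and $F_{ii}(x)=J_{ii}(x)-\dot\omega_i(x)/\omega_i(x_i)$. Because $\mathcal{X}=\prod_i\X_i$ is rectangular and hence convex, Proposition~\ref{prop:mono} shows $J(x)$ is Metzler; as $\omega_j(x_j)>0$ this makes $F(x)$ Metzler as well, so by \eqref{eq:10-2} its matrix measure is the largest row sum,
\[
\mu_\infty(F(x))=\max_{i=1,\ldots,n}\frac{1}{\omega_i(x_i)}\bigl((J(x)\omega(x))_i-\dot\omega_i(x)\bigr).
\]
Thus \eqref{eq:45} says exactly that every row sum of $F(x)$ is nonpositive, i.e.\ $\mu_\infty(F(x))\leq 0$ for all $x\in\mathcal{X}$. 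Moreover $f(x^*)=0$ forces $\dot\omega_i(x^*)=\omega_i'(x_i^*)f_i(x^*)=0$, so \eqref{eq:48} gives $\mu_\infty(F(x^*))<0$; since additionally $\dot\Theta(x^*)=0$, the matrix $F(x^*)=\Theta(x^*)J(x^*)\Theta(x^*)^{-1}$ is similar to $J(x^*)$, which is therefore Hurwitz and in particular nonsingular.

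With this in hand I would invoke the contraction theory of Sections~\ref{sec:contr-with-resp}--\ref{sec:glob-asympt-stab}. Introduce the coordinate change $z_i=\psi_i(x_i):=\int_{x_i^*}^{x_i}\tfrac{d\sigma}{\omega_i(\sigma)}$, a diffeomorphism on $\mathcal{X}$ (as $\psi_i'=1/\omega_i>0$) with $\psi(x^*)=0$; in the new coordinates the dynamics $\dot z=g(z)$ have Jacobian $F(\psi^{-1}(z))$, while $\|\delta x\|_x$ becomes the constant norm $|\delta z|_\infty$. Hence $\dot z=g(z)$ is nonexpansive with respect to $|\cdot|_\infty$ and, by continuity of $F$ and $\mu_\infty(F(x^*))<0$, strictly contractive on a neighborhood of $z=0$; the sufficient condition of Section~\ref{sec:glob-asympt-stab} then yields global asymptotic stability of $z=0$, hence of $x^*$, together with the fact that the path-length distance to $x^*$ is a global Lyapunov function; in these coordinates this distance is simply $|z|_\infty=\max_i|\psi_i(x_i)|$, which is precisely \eqref{eq:47}. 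The conditions of Definition~\ref{def:lyap} are then verified directly: positive definiteness from injectivity of $\psi$, global properness from $\max_i|\psi_i(x_i)|\geq \tfrac1c|x-x^*|_\infty$ together with continuity, and the decrease property from nonexpansiveness (non-increase along trajectories) combined with the convergence $\phi(t,x_0)\to x^*$.

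For \eqref{eq:47-2} I would use that $t\mapsto f(\phi(t,x_0))$ solves the variational equation along the trajectory, since $\tfrac{d}{dt}f(\phi(t,x_0))=J(\phi(t,x_0))\dot\phi(t,x_0)=J(\phi(t,x_0))f(\phi(t,x_0))$; consequently $\max_i \tfrac{|f_i(\phi(t,x_0))|}{\omega_i(\phi_i(t,x_0))}=|\Theta(\phi(t,x_0))f(\phi(t,x_0))|_\infty$ is non-increasing by the very same contraction estimate. Because $J(x^*)$ is nonsingular, $f$ is a local diffeomorphism at $x^*$, so \eqref{eq:47-2} is comparable to $|x-x^*|$ on a neighborhood $\mathcal{O}$ of $x^*$; this gives properness at $x^*$ and positive definiteness on $\mathcal{O}$, while the decrease property again follows from non-increase together with $f(\phi(t,x_0))\to 0$, so \eqref{eq:47-2} is a local Lyapunov function, and the same verification with $\mathcal{O}=\mathcal{X}$ applies whenever it is globally proper. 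I expect the main obstacle to be precisely the step that upgrades mere nonexpansiveness (matrix measure $\leq 0$ everywhere) plus strict contraction at the single point $x^*$ to global asymptotic stability: this is not automatic and is the reason a dedicated treatment of nonexpansive systems is required, presumably combining boundedness of trajectories (from non-increase of a proper candidate function) with an invariance-principle argument excluding $\omega$-limit sets other than $\{x^*\}$; the secondary technical point of keeping the state-dependent-metric bookkeeping rigorous is handled by the coordinate change $z=\psi(x)$.
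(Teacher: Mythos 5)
Your proposal is correct and follows essentially the same route as the paper: define $\Theta(x)=\operatorname{diag}(1/\omega_i(x_i))$, observe that the generalized Jacobian $\dot\Theta\Theta^{-1}+\Theta J\Theta^{-1}$ is Metzler so that its $\mu_\infty$ matrix measure is the largest row sum, identify \eqref{eq:45}--\eqref{eq:48} with $\mu_\infty\leq 0$ on $\mathcal{X}$ and $\mu_\infty<0$ at $x^*$ (using $\dot\omega(x^*)=0$), and then invoke the nonexpansive-systems stability result of Section \ref{sec:glob-asympt-stab} together with the variational argument for \eqref{eq:47-2}. Your coordinate change $z=\psi(x)$ is just a repackaging of the paper's explicit path-length computation of $d(x,y)$ for diagonal $\Theta$, and the ``missing'' upgrade from nonexpansiveness to global asymptotic stability is exactly what Theorem \ref{thm:2} and Corollary \ref{cor:nonexpeq} supply.
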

Note that \eqref{eq:113} and \eqref{eq:47} are state-separable Lyapunov functions, while \eqref{eq:113-2} and \eqref{eq:47-2} are agent-separable Lyapunov functions.

\begin{example}
\label{ex:statedep}
  Consider the system
  \begin{align}
    \label{eq:80}
    \dot{x}_1&=-x_1+x_2^2\\
    \dot{x}_2&=-x_2
  \end{align}
which is monotone on the invariant domain $\Domain=(\mathbb{R}_{\geq 0})^2$ with unique equilibrium $x^*=(0,0)$. The Jacobian is given by
\begin{align}
  \label{eq:86}
  J(x)=
  \begin{bmatrix}
    -1&2x_2\\
    0&-1
  \end{bmatrix}.
\end{align}
Let $\theta(x)=
\begin{bmatrix}
  1&x_2+1
\end{bmatrix}^T$ so that $\dot{\theta}(x)=
\begin{bmatrix}
  0&-x_2
\end{bmatrix}^T$ and
\begin{align}
  \label{eq:90}
  \theta(x)^TJ(x)+\dot{\theta}(x)^T&=
  \begin{bmatrix}
    -1&-1
  \end{bmatrix}\leq 0
\end{align}
for all $x\in\Domain$. Since also $\theta(x^*)^TJ(x^*)=  \begin{bmatrix}
    -1&-1
  \end{bmatrix}$, Theorem~\ref{thm:tvmain1} is applicable. Therefore, $x^*=0$ is globally asymptotically stable and, from \eqref{eq:113} and \eqref{eq:113-2},  either of the following are global Lyapunov functions:
  \begin{align}
    \label{eq:93}
    V_1(x)%
&=x_1+x_2+\frac{1}{2}x_2^2\text{, and}\\
V_2(x)%
&=\left|{-x_1}+x_2^2\right|+x_2+x_2^2.
  \end{align}
On the other hand, let $\omega(x)=
\begin{bmatrix}
  2&\frac{1}{1+x_2}
\end{bmatrix}^T$ for which $\dot{\omega}(x)=\begin{bmatrix}0&\frac{x_2}{(1+x_2)^2}\end{bmatrix}^T$ and
\begin{align}
  \label{eq:94}
  J(x)\omega(x)-\dot{\omega}(x)=
  \begin{bmatrix}
    -2+\frac{2x_2}{1+x_2}\\
\frac{-1}{1+x_2}-\frac{x_2}{(1+x_2)^2}
  \end{bmatrix}\leq 0
\end{align}
for all $x\in\Domain$. Since also $J(x^*)\omega(x^*)=\begin{bmatrix}-2&-1\end{bmatrix}^T$, Theorem~\ref{thm:tvmain2} is applicable. 
Thus, from \eqref{eq:47} and \eqref{eq:47-2}, either of the following are also Lyapunov functions:
\begin{align}
  \label{eq:95}
  V_3(x)%
&=\max\left\{\frac{1}{2}x_1,x_2+\frac{1}{2}x_2^2\right\}{\text{, and}}\\
V_4(x)%
&=\max\left\{\frac{1}{2}\left|{-x_1}+x_2^2\right|,x_2+x_2^2\right\}.
\end{align}
$\blacksquare$
\end{example}

%


\begin{remark}
  Even when \eqref{eq:113-2} or \eqref{eq:47-2} is not globally proper and thus not a global Lyapunov function, it is nonetheless the case that both functions monotonically decrease to zero along any trajectory of the system.
\end{remark}

We now specialize Theorems \ref{thm:tvmain1} and \ref{thm:tvmain2} to the case where $\theta(x)$ and $\omega(x)$ are independent of $x$, that is, are constant vectors. This special case proves to be especially useful in a number of applications as demonstrated in Section \ref{sec:examples}.

\begin{cor}
\label{cor:1}
    Let \eqref{eq:1} be a monotone system with equilibrium $x^*$. Suppose there exists a vector $v>0$ such that $v^TJ(x)\leq 0$ for all $x\in\mathcal{X}$ and {$v^TJ(x^*)<0$. Then $x^*$ is globally asymptotically stable}. 
Furthermore, %
\begin{align}
  \label{eq:17}
  \sum_{i=1}^nv_i|x_i-x_i^*|%
\end{align}
is a global Lyapunov function and
\begin{align}
\label{eq:17-2}
\sum_{i=1}^nv_i|f_i(x)|
\end{align}
is a local Lyapunov function. {If, additionally, there exists $c>0$ such that $v^TJ(x)\leq -c\mathbf{1}^T$ for all $x\in\mathcal{X}$, then \eqref{eq:17-2} is also a global Lyapunov function.}

\end{cor}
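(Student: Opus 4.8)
The plan is to obtain everything except the final sentence as an immediate specialization of Theorem~\ref{thm:tvmain1}, and then to supply a short independent argument for the remaining global-properness claim. First I would apply Theorem~\ref{thm:tvmain1} with the constant choice $\theta_i(x_i)=v_i$. Since $v>0$ we may take the constant in that theorem to be $\min_i v_i>0$, and since each $\theta_i$ is constant we have $\dot\theta(x)\equiv 0$, so \eqref{eq:111} and \eqref{eq:114} reduce exactly to $v^TJ(x)\le 0$ on $\mathcal{X}$ and $v^TJ(x^*)<0$, which are the hypotheses of the corollary. Moreover $\int_{x_i^*}^{x_i}\theta_i(\sigma)\,d\sigma=v_i(x_i-x_i^*)$, so \eqref{eq:113} becomes \eqref{eq:17} and \eqref{eq:113-2} becomes \eqref{eq:17-2}. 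Hence global asymptotic stability of $x^*$, the global-Lyapunov property of \eqref{eq:17}, and the local-Lyapunov property of \eqref{eq:17-2} all follow at once, and by the last sentence of Theorem~\ref{thm:tvmain1} it only remains to show that the extra hypothesis $v^TJ(x)\le -c\mathbf{1}^T$ forces $W(x):=\sum_{i=1}^n v_i|f_i(x)|$ to be globally proper.

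For that, I would establish a linear coercivity bound $W(x)\ge \gamma\,|x-x^*|_1$ for some $\gamma>0$. Writing $\bar J(x):=\int_0^1 J\bigl(x^*+t(x-x^*)\bigr)\,dt$ (the segment lies in $\mathcal{X}$ by convexity, which is in force since $\mathcal{X}$ is rectangular), the fundamental theorem of calculus together with $f(x^*)=0$ gives $f(x)=\bar J(x)(x-x^*)$. Now $\bar J(x)$ is Metzler, being an average of Metzler matrices (Proposition~\ref{prop:mono}), and $v^T\bar J(x)=\int_0^1 v^TJ(\cdot)\,dt\le -c\mathbf{1}^T<0$; thus $M(x):=-\bar J(x)$ is a $Z$-matrix possessing the strictly positive left vector $v$ with $v^TM(x)\ge c\mathbf{1}^T$, so $M(x)$ is a nonsingular M-matrix with $M(x)^{-1}\ge 0$ entrywise.

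Setting $w^T:=\mathbf{1}^TM(x)^{-1}\ge 0$, we have $w^TM(x)=\mathbf{1}^T$, and from $v^TM(x)\ge c\mathbf{1}^T=c\,w^TM(x)$ we get $(v-cw)^TM(x)\ge 0$; multiplying on the right by $M(x)^{-1}\ge 0$ yields $v-cw\ge 0$, i.e.\ $w\le v/c$, so the induced matrix $1$-norm satisfies $\|M(x)^{-1}\|_1=\max_j w_j\le |v|_\infty/c$ uniformly in $x$. Since $M(x)(x-x^*)=-f(x)$, this gives
\[
|x-x^*|_1=\bigl|M(x)^{-1}(-f(x))\bigr|_1\le \frac{|v|_\infty}{c}\,|f(x)|_1\le \frac{|v|_\infty}{c\,\min_i v_i}\,W(x),
\]
so $W(x)\ge \gamma\,|x-x^*|_1$ with $\gamma=c\,(\min_i v_i)/|v|_\infty>0$. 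Because $W$ is continuous, each sublevel set $\{x\in\mathcal{X}: W(x)\le L\}$ is closed in $\mathcal{X}$, and by the bound it is bounded, hence compact (using that $\mathcal{X}$ is closed, as in the running examples); therefore $W$ is globally proper, and Theorem~\ref{thm:tvmain1} concludes that \eqref{eq:17-2} is a global Lyapunov function. As a byproduct, the invertibility of $M(x)$ re-confirms that $x^*$ is the unique equilibrium, which is what underlies the global positive-definiteness of $W$.

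The only genuinely new ingredient beyond Theorem~\ref{thm:tvmain1} is this properness step, and the crux is recognizing that the pointwise, infinitesimal condition $v^TJ(x)\le -c\mathbf{1}^T$ together with monotonicity makes the averaged Jacobian a \emph{uniformly} nonsingular M-matrix with a uniform bound $\|(-\bar J(x))^{-1}\|_1\le |v|_\infty/c$ on its inverse; it is this uniformity that upgrades the sign condition into the global estimate $W(x)\gtrsim |x-x^*|_1$. If one prefers to avoid M-matrix theory, the same uniform-invertibility fact can instead be fed into a Hadamard--L\'evy global-inverse argument to show directly that $f$ is a proper map, which is all that is required.
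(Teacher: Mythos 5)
Your proposal is correct. The reduction of everything except the last sentence to Theorem~\ref{thm:tvmain1} with constant $\theta_i\equiv v_i$ is exactly what the paper does, and both arguments for the remaining global-properness claim start from the same integral representation $f(x)=\bar J(x)(x-x^*)$ with $\bar J(x)=\int_0^1 J(x^*+s(x-x^*))\,ds$ and end at the same coercivity estimate $W(x)\gtrsim |x-x^*|_1$. Where you diverge is in how that estimate is extracted from the column-sum condition. The paper first converts $v^TJ(x)\leq -c\mathbf{1}^T$ into $\mu_1(\Theta J(x)\Theta^{-1})\leq -\tilde c$ with $\Theta=\mathrm{diag}\{v\}$ and $\tilde c = c/|v|_\infty$, and then applies the matrix-measure inequalities $|Ay|\geq -\mu(A)|y|$ and $\mu(A+B)\leq\mu(A)+\mu(B)$ to the averaged matrix, obtaining $|\Theta f(x)|_1\geq\tilde c\,|\Theta(x-x^*)|_1$ directly (this is the Desoer--Vidyasagar argument the paper cites and reproduces). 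You instead observe that $-\bar J(x)$ is a Z-matrix with a uniform positive left vector, hence a nonsingular M-matrix with entrywise nonnegative inverse, and you bound $\|(-\bar J(x))^{-1}\|_1\leq |v|_\infty/c$ uniformly; the constants that come out are essentially identical. Your route requires importing the standard M-matrix facts (inverse-nonnegativity and the positive-left-vector characterization), which the paper never develops, whereas the paper's route stays entirely inside the matrix-measure toolkit it has already set up and generalizes more readily to other norms (it is what makes the ``symmetric argument'' for Corollary~\ref{cor:2} with $\mu_\infty$ immediate); on the other hand your argument makes the uniform invertibility of $\bar J(x)$, and hence the uniqueness of the equilibrium, completely explicit. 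The only point to flag is the compactness of sublevel sets: closedness of $\mathcal{X}$ (or $\mathcal{X}=\mathbb{R}^n$, where properness is just radial unboundedness) is needed to pass from ``closed in $\mathcal{X}$ and bounded'' to ``compact,'' but the paper's own proof glosses over the same technicality, so this is not a gap specific to your argument.
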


\begin{cor}
\label{cor:2}
      Let \eqref{eq:1} be a monotone system with equilibrium $x^*$. Suppose there exists a vector $w>0$ such that $J(x)w\leq 0$ for all $x\in\mathcal{X}$ and {$J(x^*)w<0$. Then $x^*$ is globally asymptotically stable}. 
Furthermore, %
\begin{align}
  \label{eq:19}
  \max_{i=1,\ldots,n}\left\{\frac{1}{w_i}|x_i-x_i^*|\right\}
\end{align}
is a global Lyapunov function and
\begin{align}
\label{eq:20}
  \max_{i=1,\ldots,n}\left\{\frac{1}{w_i}|f_i(x)|\right\}
\end{align}
is a local Lyapunov function. {If, additionally, there exists $c>0$ such that $J(x)w\leq -c\mathbf{1}$ for all $x\in\mathcal{X}$, then \eqref{eq:20} is also a global Lyapunov function.}

\end{cor}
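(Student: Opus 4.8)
\noindent\textbf{Proof plan for Corollary~\ref{cor:2}.}
The plan is to read Corollary~\ref{cor:2} off Theorem~\ref{thm:tvmain2} by taking the weights constant, and then to supply the single ingredient not furnished by Theorem~\ref{thm:tvmain2}: global properness of \eqref{eq:20} under the strengthened hypothesis. First I would set $\omega_i(x_i)\equiv w_i$, so that $\omega(x)\equiv w$ and $\dot\omega(x)\equiv 0$; then \eqref{eq:45} is exactly $J(x)w\le 0$ for all $x\in\mathcal{X}$, \eqref{eq:48} is exactly $J(x^*)w<0$, each $\omega_i$ is trivially continuously differentiable, and the standing requirement $0<\omega_i(x_i)\le c$ holds with $c=\max_i w_i$. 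Theorem~\ref{thm:tvmain2} then gives global asymptotic stability of $x^*$; and since $\int_{x_i^*}^{x_i}\frac{1}{\omega_i(\sigma)}\,d\sigma=(x_i-x_i^*)/w_i$, the function \eqref{eq:47} reduces to \eqref{eq:19} and \eqref{eq:47-2} reduces to \eqref{eq:20}, establishing every claim except the last. (If $\mathcal{X}$ is not rectangular the step still goes through, since with a constant metric the rectangularity hypothesis of Theorem~\ref{thm:tvmain2} serves only to make the scalar integrals in \eqref{eq:47} well defined, which here needs nothing.)

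It then remains to show that $J(x)w\le -c\mathbf{1}$ for all $x\in\mathcal{X}$ forces $V(x):=\max_{i}w_i^{-1}|f_i(x)|$ to be globally proper. Let $\|\cdot\|_w$ denote the weighted infinity norm $\|y\|_w=\max_i w_i^{-1}|y_i|$, so that $V(x)=\|f(x)\|_w$, and let $D=\mathrm{diag}(w)$. The induced matrix measure obeys $\mu_w(A)=\mu_\infty(D^{-1}AD)$; since $J(x)$ is Metzler (Proposition~\ref{prop:mono}), so is $D^{-1}J(x)D$, and hence by \eqref{eq:10-2} one gets $\mu_w(J(x))=\max_i w_i^{-1}(J(x)w)_i\le -\bar c$ for all $x\in\mathcal{X}$, where $\bar c:=c/\max_j w_j>0$. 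Differentiating \eqref{eq:1} shows that the velocity $y(t):=f(\phi(t,x_0))$ solves the variational equation $\dot y=J(\phi(t,x_0))\,y$, so the matrix-measure (Coppel) estimate reviewed in Section~\ref{sec:contr-with-resp} gives $\|f(\phi(t,x_0))\|_w\le e^{-\bar c t}\|f(x_0)\|_w$ for all $t\ge 0$ and all $x_0\in\mathcal{X}$.

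Finally I would integrate this estimate. Global asymptotic stability gives $\phi(t,x_0)\to x^*$, so from \eqref{eq:1}, $x_0-x^*=-\int_0^\infty f(\phi(s,x_0))\,ds$ with the integral converging absolutely, whence $\|x_0-x^*\|_w\le\int_0^\infty e^{-\bar c s}\,ds\cdot\|f(x_0)\|_w=V(x_0)/\bar c$. Therefore the set $\{x\in\mathcal{X}:V(x)\le L\}$ is contained in the box $\{x:\|x-x^*\|_w\le L/\bar c\}$, hence bounded, and being closed (by continuity of $V$ and closedness of $\mathcal{X}$) it is compact; so $V$ is globally proper and thus a global Lyapunov function. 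The parallel statement in Corollary~\ref{cor:1} is obtained symmetrically, using the weighted $\ell_1$ norm and \eqref{eq:10} in place of \eqref{eq:10-2}, together with Theorem~\ref{thm:tvmain1}.

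I expect the second paragraph to be the crux: one has to set up the ``velocity'' system $\dot y=J(\phi(t,x_0))\,y$ correctly and justify, from the contraction machinery of Section~\ref{sec:contr-with-resp}, the differential inequality for $\|y(t)\|_w$ whose right-hand side involves the matrix measure evaluated \emph{along} the trajectory; the rest is routine. A smaller point worth flagging is that the integral representation of $x_0-x^*$ is valid precisely because the global asymptotic stability already proved identifies $x^*$ with $\lim_{t\to\infty}\phi(t,x_0)$, and that properness in Definition~\ref{def:lyap} is relative to $\mathcal{X}$, so closedness of $\mathcal{X}$ is what the compactness conclusion rests on.
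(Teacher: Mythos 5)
Your proposal is correct, and the reduction of the corollary to Theorem~\ref{thm:tvmain2} with constant weights $\omega_i\equiv w_i$ (so $\dot\omega\equiv 0$) is exactly what the paper does. Where you genuinely diverge is in the crux step, the global properness of \eqref{eq:20} under $J(x)w\le -c\mathbf{1}$. The paper's argument is \emph{static}: writing (with $x^*=0$ taken without loss of generality) $\Theta f(x)=\int_0^1\Theta J(sx)x\,ds$ for $\Theta=\mathrm{diag}(v)$, it invokes $|Az|\ge-\mu(A)|z|$ together with subadditivity of the matrix measure to get $|\Theta f(x)|_1\ge\tilde c\,|\Theta x|_1$ directly, i.e.\ a mean-value estimate along the straight segment from $x^*$ to $x$ (this is the ``slight modification of Desoer--Vidyasagar, Theorem 33'' the paper cites). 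Your argument is \emph{dynamic}: you propagate the velocity $y(t)=f(\phi(t,x_0))$ through the variational equation, use Coppel's inequality to get $\|f(\phi(t,x_0))\|_w\le e^{-\bar c t}\|f(x_0)\|_w$, and then integrate $x_0-x^*=-\int_0^\infty f(\phi(s,x_0))\,ds$ to obtain $\|x_0-x^*\|_w\le V(x_0)/\bar c$. Both routes land on the same coercivity bound $V(x)\gtrsim|x-x^*|$ and both pieces of machinery (Coppel's lemma, the matrix-measure calculus) are already set up in Section~\ref{sec:contr-with-resp}, so either is admissible. The paper's segment argument is shorter and needs only convexity of $\mathcal{X}$ (so the segment stays in the domain) plus the pointwise measure bound; yours needs forward completeness, invariance, and the already-established convergence $\phi(t,x_0)\to x^*$, but it avoids the normalization $x^*=0$ and arguably makes the mechanism (exponential decay of the flow magnitude) more transparent. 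One small caveat on your parenthetical about rectangularity: with constant weights the identification $d(x,y)=|\Theta(y-x)|$ still relies on the minimizing straight line lying in $\mathcal{X}$, i.e.\ on convexity, which the paper is itself implicitly assuming (cf.\ Proposition~\ref{prop:mono}); so the hypothesis is not entirely vacuous, merely weakened from rectangularity to convexity.
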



%

%

Note that \eqref{eq:17} and \eqref{eq:19} are state separable Lyapunov functions while \eqref{eq:17-2} and \eqref{eq:20} are flow separable Lyapunov functions. {\text{Corollaries \ref{cor:1} and \ref{cor:2}} were previously reported in \cite{Coogan:2016kx}}.

The following example shows that Corollaries \ref{cor:1} and \ref{cor:2} recover a well-known condition for stability of monotone linear systems.
 \begin{example}[Linear systems]
\label{ex:1}
Consider $\dot{x}=Ax$ for $A$ Metzler. Corollaries \ref{cor:1} and \ref{cor:2} imply that if one of the following conditions holds, then the origin is globally asymptotically stable:
\begin{align}
  \label{eq:53}
  &\text{There exists $v>0$ such that $v^TA<0$, \quad or}\\
  \label{eq:53-2}&\text{There exists $w>0$ such that $Aw<0$}.
\end{align}
If \eqref{eq:53} holds then $\sum_{i=1}^nv_i|x_i|$ and $\sum_{i=1}^n v_i|(Ax)_i|$ are Lyapunov functions, and if \eqref{eq:53-2} holds then $\max_i\{|x_i|/w_i\}$ and $\max_i\{|(Ax)_i|/w_i\}$ are Lyapunov functions where $(Ax)_i$ denotes the $i$th element of $Ax$. \hfill $\blacksquare$
\end{example}
In fact, it is well known that $A$ is Hurwitz if and only if either (and, therefore, both) of the two conditions \eqref{eq:53} and \eqref{eq:53-2} hold, as noted in, \emph{e.g.}, \cite [Proposition 1]{Rantzer:2012fj}, and the corresponding state separable Lyapunov functions of Example \ref{ex:1} are also derived in \cite{Rantzer:2012fj}. Thus, Theorems \ref{thm:tvmain1} and \ref{thm:tvmain2}, along with Corollaries \ref{cor:1} and \ref{cor:2}, are considered as nonlinear extensions of these results.

The proofs of Theorems \ref{thm:tvmain1} and \ref{thm:tvmain2} use contraction theoretic arguments and, specifically, show that a monotone system satisfying the hypotheses of the theorems is contractive with respect to a suitably defined, state-dependent norm. The proof technique illuminates useful properties of contractive systems that are of independent interest and appear to be novel. These results are presented next, before returning to the proofs of the above theorems.%

\section{Contraction with respective to state-dependent, non-Euclidean metrics}
\label{sec:contr-with-resp}
In the following two sections, we develop preliminary results necessary to prove our main results of Section \ref{sec:main-results}.   Here, we do not require the monotonicity property. Moreover, we develop our results for potentially time-varying systems, that is, systems of the form $\dot{x}=f(t,x)$.%

We first provide conditions for establishing that a nonlinear system is \emph{contractive}. A system is contractive with respect to a given metric if the distance between any two trajectories decreases at an exponential rate. Contraction with respect to Euclidean norms with potentially state-dependent weighting matrices is considered in \cite{LOHMILLER:1998bf}. This approach equips the state-space with a Riemannian structure. For metrics defined using non-Euclidean norms, which have proven useful in many applications, a Riemannian approach is insufficient, and contraction has been characterized using matrix measures for fixed (i.e., state-independent) norms \cite{Sontag:2010fk}. These approaches were recently unified and generalized in \cite{Forni:2012qe} using the theory of Finsler-Lyapunov functions {which lifts Lyapunov theory to the tangent bundle}.

Here, {we present conditions that establish contraction }with respect to potentially state-dependent, non-Euclidean norms. {The proofs of these conditions rely on the general theoretical framework of Finsler-Lyapunov theory of \cite{Forni:2012qe}, and thus our results are an application of the main results of \cite{Forni:2012qe}. }
To the best of our knowledge, such explicit characterizations of contraction with respect to state-dependent, non-Euclidean norms using matrix measures are not established elsewhere in the literature. 
%

\begin{definition}
  Let $|\cdot|$ be a norm on $\mathbb{R}^n$ and for a {convex} set $\X\subseteq \mathbb{R}^{n}$, let $\Theta:\X\to\mathbb{R}^{n\times n}$ be continuously differentiable and {satisfy $\Theta(x)=\Theta(x)^T\succ 0$ for all $x\in\X$.}
{Let $\mathcal{K}\subseteq \X$ be such that any two points in $\mathcal{K}$ can be connected by a smooth curve,} and, for any two points $x,y\in {\mathcal{K}}$, let {$\Gamma_{\mathcal{K}}(x,y)$}  be the set of piecewise continuously differentiable curves $\gamma:[0,1]\to {\mathcal{K}}$ connecting $x$ to $y$ {within $\mathcal{K}$} so that $\gamma(0)=x$ and $\gamma(1)=y$. The induced \emph{distance metric} is given by
  \begin{align}
    \label{eq:64}
    d_{{\mathcal{K}}}(x,y)=\inf_{\gamma\in\Gamma_{{\mathcal{K}}}(x,y)}\int_{0}^1|\Theta(\gamma(s)){\gamma'}(s)| ds.
  \end{align}
{When $\mathcal{K}=\X$, we drop the subscript and write $d(x,y)$ for $d_\X(x,y)$.}
\end{definition}

\begin{remark}
\label{rem:fin}
{Assuming $\Theta(x)$ is extended to a function defined on an open set containing $\X$}, from a differential geometric perspective, the function $V(x,\delta x):=|\Theta(x)\delta x|$ is a \emph{Finsler function} defined on the tangent bundle {of this open set}, and the distance metric $d(x,y)$ is the \emph{Finsler metric} associated with ${V}$ \cite{Bao:2012mg}. 
\end{remark}

%

In the remainder of this section and in the following section, we study the system
\begin{equation}
  \label{eq:1010}
  \dot{x}=f(t,x)
\end{equation}
for $x\in \Domain\subseteq \mathbb{R}^n$ where {$\mathcal{X}$ is convex}, $f(t,x)$ is differentiable in $x$, and $f(t,x)$ and the Jacobian $J(t,x):= \frac{\partial f}{\partial x}(t,x)$ are continuous in $(t,x)$. {When $\X$ is not an open set, we assume there exists an open set containing $\X$ such that $f(t,\cdot)$ can be extended as a differentiable function on this set and the continuity requirements also hold on this set.}  As before, $\phi(t,x_0)$ denotes the solution to \eqref{eq:1010} at time $t$ when the system is initialized with state $x_0$ at time $t=0$.

{Given \eqref{eq:1010} and continuously differentiable $\Theta:\X\to\mathbb{R}^{n\times n}$, $\dot{\Theta}(t,x)$ is shorthand for the matrix given elementwise by $[\dot{\Theta}(t,x)]_{ij}=\frac{\partial \Theta_{ij}}{\partial x}(x)f(t,x)$. 
}

\begin{prop}
\label{thm:1}
Consider the system \eqref{eq:1010} and suppose $\mathcal{K}\subseteq \Domain$ is forward invariant and {such that any two points in $\mathcal{K}$ can be connected by a smooth curve}. Let $|\cdot|$ be a norm on $\mathbb{R}^n$ with induced matrix measure $\mu(\cdot)$ and let $\Theta:{\mathcal{X}}\to\mathbb{R}^{n\times n}$ be continuously differentiable {and satisfy $\Theta(x)=\Theta(x)^T\succ 0$ for all $x\in\X$.} If there exists $c\geq 0$ such that
\begin{align}
  \label{eq:69}
  \mu\left(\dot{\Theta}(t,x)\Theta(x)^{-1}+\Theta(x)J(t,x)\Theta(x)^{-1}\right) \leq -c
\end{align}
for all $x\in\mathcal{K}$ and all $t\geq 0$, then
\begin{align}
  \label{eq:70}
  d_{{\mathcal{K}}}(\phi(t,y_0),\phi(t,x_0))\leq e^{-ct}d_{{\mathcal{K}}}(y_0,x_0)
\end{align}
for all $x_0,y_0\in \mathcal{K}$. Moreover, if the system is time-invariant so that $\dot{x}=f(x)$, then
\begin{align}
  \label{eq:71}
  |\Theta(\phi(t,x_0))f(\phi(t,x_0))|\leq e^{-ct}|\Theta(x_0)f(x_0)|
\end{align}
for all $x_0\in\mathcal{K}$ and all $t\geq 0$.
\end{prop}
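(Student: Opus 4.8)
The plan is to invoke the Finsler–Lyapunov framework of \cite{Forni:2012qe}. The candidate Finsler function is $V(x,\delta x) = |\Theta(x)\delta x|$, and I will show that along the prolonged (variational) dynamics its derivative decays at rate $c$, which by the machinery of \cite{Forni:2012qe} yields the contraction estimate \eqref{eq:70} for the associated Finsler distance $d_{\mathcal{K}}$.

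First I would write down the variational system associated with \eqref{eq:1010}: if $\gamma_0:[0,1]\to\mathcal{K}$ is a piecewise $C^1$ curve and $\gamma(t,\cdot) = \phi(t,\gamma_0(\cdot))$ is the curve transported by the flow, then $\delta x(t,s) := \frac{\partial}{\partial s}\gamma(t,s)$ satisfies $\dot{\delta x} = J(t,x)\,\delta x$ along each trajectory $x(t) = \phi(t,\gamma_0(s))$. Here I use forward invariance of $\mathcal{K}$ and the smooth-connectedness hypothesis so that the transported curve stays in $\mathcal{K}$ and the distance is computed over admissible curves. Next, set $w(t,s) := \Theta(x(t,s))\,\delta x(t,s)$, so that $V$ evaluated along the flow equals $|w(t,s)|$. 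Differentiating in $t$ and using the chain rule together with the shorthand $[\dot\Theta]_{ij} = \frac{\partial\Theta_{ij}}{\partial x}f$, I get
\begin{align}
  \label{eq:wdot}
  \dot w = \dot\Theta(t,x)\,\delta x + \Theta(x)J(t,x)\,\delta x
  = \bigl(\dot\Theta(t,x)\Theta(x)^{-1} + \Theta(x)J(t,x)\Theta(x)^{-1}\bigr)\,w =: M(t,x)\,w,
\end{align}
where $M(t,x)$ is exactly the matrix appearing in \eqref{eq:69}. The standard matrix-measure bound (the Coppel/Dahlquist inequality, see \cite{Desoer:2008bh, Vidyasagar:2002ly}) gives $\frac{d^+}{dt}|w(t,s)| \le \mu(M(t,x(t,s)))\,|w(t,s)| \le -c\,|w(t,s)|$ for almost every $t$, hence $|w(t,s)| \le e^{-ct}|w(0,s)|$. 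Integrating over $s\in[0,1]$ yields, for the transported curve, $\int_0^1 |\Theta(\gamma(t,s))\gamma'(t,s)|\,ds \le e^{-ct}\int_0^1|\Theta(\gamma_0(s))\gamma_0'(s)|\,ds$; taking the infimum over $\gamma_0\in\Gamma_{\mathcal{K}}(y_0,x_0)$ and noting the transported curve connects $\phi(t,y_0)$ to $\phi(t,x_0)$ within $\mathcal{K}$ gives \eqref{eq:70}.

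For the time-invariant statement \eqref{eq:71}, the key observation is that $\delta x = f(x)$ is itself a solution of the variational equation: differentiating $\dot x = f(x)$ with respect to an initial-condition perturbation in the direction $f(x_0)$ shows that $\frac{\partial}{\partial t}\phi(t,x_0)$ satisfies $\dot{\delta x} = J(x)\delta x$ with $\delta x(0) = f(x_0)$, and by uniqueness this is $f(\phi(t,x_0))$. Applying the same estimate to $w(t) = \Theta(\phi(t,x_0))f(\phi(t,x_0))$ gives $|w(t)| \le e^{-ct}|w(0)|$, which is \eqref{eq:71}.

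The main obstacle is the regularity/rigor at the interface with \cite{Forni:2012qe}: justifying differentiation under the integral sign for the transported curve when $\gamma_0$ is only piecewise $C^1$, handling the one-sided (Dini) derivative of the possibly nonsmooth norm $|w(t)|$ and the resulting almost-everywhere differential inequality, and—most delicately—the case when $\X$ (hence $\mathcal{K}$) is not open, where one must either argue within the induced topology or use the assumed differentiable extension of $f$ and $\Theta$ to an open neighborhood. I expect the cleanest route is to cite the relevant statement of \cite{Forni:2012qe} for the distance contraction \eqref{eq:70}, verifying only that the Finsler–Lyapunov condition translates precisely into \eqref{eq:69}, and then give the short self-contained argument above for \eqref{eq:71}.
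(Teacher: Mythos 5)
Your proposal is correct and follows essentially the same route as the paper: the Finsler--Lyapunov function $V(x,\delta x)=|\Theta(x)\delta x|$, the variational dynamics $\dot{\delta x}=J(t,x)\delta x$, the matrix-measure (Coppel) bound showing $|\Theta(x)\delta x|$ decays at rate $c$, the appeal to \cite{Forni:2012qe} to lift this to the distance $d_{\mathcal{K}}$, and the observation that $f(\phi(t,x_0))$ solves the variational equation to obtain \eqref{eq:71}. The only cosmetic difference is that you also sketch the lift to \eqref{eq:70} by integrating the pointwise estimate over a transported curve, which is the content of the cited theorem rather than a new argument.
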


Before proceeding with the proof of Proposition \ref{thm:1}, we first note that when $\Theta(x)\equiv \hat{\Theta}\succ 0$ for some fixed symmetric $\hat\Theta\in\mathbb{R}^{n\times n}$, then $|\cdot|_{\hat{\Theta}}$ defined as $|z|_{\hat{\Theta}}=|\hat\Theta z|$ for all $z$ is a norm and $d(x,y)=|y-x|_{\hat\Theta}$. The corresponding induced matrix measure satisfies $\mu_{\hat{\Theta}}(A)=\mu(\hat{\Theta} A\hat{\Theta}^{-1})$ for all $A$ so that Proposition \ref{thm:1} states: if $\mu_{\hat\Theta}(J(t,x))\leq -c\leq 0$ for all $t$ and $x$, then $|\phi(t,x_0)-\phi(t,y_0)|_{\hat\Theta}\leq e^{-ct}|x_0-y_0|_{\hat\Theta}$ and, in the time-invariant case, $|f(\phi(t,x_0))|_{\hat\Theta}\leq e^{-ct}|f(x_0)|_{\hat\Theta}$, which recovers familiar results for contractive systems with respect to (state-independent) non-Euclidean norms; see, \emph{e.g.}, \cite[Theorem 1]{Sontag:2010fk}, \cite[Proposition 2]{Coogan:2016kx}. Thus, Proposition \ref{thm:1} is an extension of these results to state-dependent, non-Euclidean norms.

\begin{proof}
 Let $V(x,\delta x)=|\Theta(x)\delta x|$. To prove \eqref{eq:70}, we claim that $V(x,\delta x)$ satisfies
\begin{align}
  \label{eq:83}
  \dot{V}(x, \delta x)\leq -cV(x, \delta x)
\end{align}
along trajectories of  the \emph{variational} dynamics
 \begin{align}
   \label{eq:73}
   \dot{x}&=f(t,x)\\
   \label{eq:73-2}\dot{\delta x}&=J(t,x)\delta x
 \end{align}
{when $x(0)\in\mathcal{K}$.} 


To prove the claim, let $(x(t),\delta x(t))$  be some trajectory of the variational dynamics \eqref{eq:73}--\eqref{eq:73-2} {with $x(t)\in\mathcal{K}$ for all $t\geq 0$}. In the following, we omit dependent variables from the notation when clear and write, \emph{e.g.}, $f$, $J$, and $\Theta$ instead of $f(t,x)$, $J(t,x)$, and $\Theta(x)$.  We then have{, for almost all $t$,}
\begin{align}
  \label{eq:66}
&\dot{V}(x,\delta x)\\
&:=\lim_{h\to 0^+}\frac{V(x(t+h),\delta x(t+h))-V(x(t),\delta x(t))}{h}\\
&=\lim_{h\to 0^+}\frac{V(x+hf,\delta x+hJ \delta x)-V(x,\delta x)}{h}\\
&=\lim_{h\to 0^+}\frac{|\Theta(x+hf)(\delta x+hJ \delta x)|-|\Theta(x)\delta x|}{h}\\
&=\lim_{h\to 0^+}\frac{|(\Theta+h\dot{\Theta})(\delta x+hJ\delta x)|-|\Theta \delta x|}{h}\\
&= \lim_{h\to 0^+}\frac{|(I+h(\Theta J\Theta^{-1}+\dot{\Theta}\Theta^{-1}))(\Theta \delta x)|-|\Theta \delta x|}{h}\\
&\leq \lim_{h\to 0^+}\frac{|I+h(\Theta J\Theta^{-1}+\dot{\Theta}\Theta^{-1})||\Theta \delta x|-|\Theta \delta x|}{h}\\
&=\mu\left(\Theta J\Theta^{-1}+\dot{\Theta}\Theta^{-1}\right)|\Theta \delta x|\\
&\leq -c |\Theta(x) \delta x|,
\end{align}
and we have proved \eqref{eq:83}. {Recall that we consider $\mathcal{X}\subseteq \mathcal{X}^o$ for some open set $\X^o$ for which $f(t,\cdot)$ and $\Theta(\cdot)$ can be extended to functions with domain $\X^o$. Using the fact that $\X^o$ is a smooth manifold \cite[pp. 56]{Boothby:1986dz} and $\mathcal{K}$ is a positively invariant, connected subset of $\X^o$, we apply \cite [Theorem 1]{Forni:2012qe} to conclude that \eqref{eq:83} implies \eqref{eq:70} for all $x_0,y_0\in \mathcal{X}$.}

To prove \eqref{eq:71}, now suppose $\dot{x}=f(x)$. Then $\dot{f}(x)=J(x)f(x)$ so that $(x(t),f(x(t)))$ is a trajectory of the variational system for any trajectory $x(t)$ of $\dot{x}=f(x)$. Equation \eqref{eq:71} then follows immediately from \eqref{eq:83}.
\end{proof}

We note that \eqref{eq:71} can be proved directly by applying a change of coordinates. In particular, for $\dot{x}=f(x)$ and $\Theta(x)$ as in the hypotheses of Proposition \ref{thm:1}, consider the time-varying change of variables  $w:=\Theta(x)f(x)$ for which $\dot{w}=\tilde{J}(x) w$ where
\begin{align}
  \label{eq:74}
  \tilde{J}(x):=\dot{\Theta}(x)\Theta^{-1}(x)+\Theta(x)J(x)\Theta^{-1}(x).
\end{align}
By Coppel's Lemma (see, \emph{e.g.}, \cite[Theorem II.8.27]{Desoer:2008bh}), $|w(t)|\leq e^{\int_0^t\mu(\tilde{J}(x(t))} |w(0)|\leq e^{-ct}|w(0)|$ where the second inequality follows from \eqref{eq:69}, providing an alternative interpretation for \eqref{eq:71}.

{
\begin{remark}
From Remark \ref{rem:fin}, $V(x,\delta x)$ given in the proof of Proposition \ref{thm:1} can be regarded as a Finsler function defined on the tangent bundle of {an open set (\emph{i.e.}, manifold) containing } $\Domain$. It follows that $V(x,\delta x)$  is then a \emph{Finsler-Lyapunov} function as defined in \cite{Forni:2012qe} for $\dot{x}=f(t,x)$ where $\delta x$ is the \emph{virtual displacement} associated with the system.
\end{remark}
}

\begin{definition}
 A system for which the hypotheses of Proposition \ref{thm:1} hold with $c<0$ (respectively, $c=0$) is \emph{contractive} (respectively, \emph{nonexpansive}) with respect to $|\cdot|$ and $\Theta(x)$ on $\mathcal{K}$.
\end{definition}

{
\begin{remark}
\label{rem:invariant}
If $\mathcal{K}$ in the statement of Proposition \ref{thm:1} is not forward invariant, then a straightforward modification of the proof implies that the conclusions \eqref{eq:70} and \eqref{eq:71} hold for all $t\geq 0$ such that $\phi(\tau,x_0)\in \mathcal{K}$ and $\phi(\tau,y_0)\in\mathcal{K}$ for all $\tau\in[0,t]$.   
\end{remark}
}


%
%
%
%
%
%
%

\section{A global asymptotic stability result for nonexpansive systems}
\label{sec:glob-asympt-stab}
When the hypotheses of Proposition \ref{thm:1} are only satisfied with $c=0$, the system is nonexpansive as defined above so that the distance between any pair of trajectories is nonincreasing but not necessarily exponentially decreasing as when the system is contractive, \emph{i.e.}, when $c<0$. Nonetheless, if the vector field is periodic and there exists a periodic trajectory that passes through a region in which the contraction property holds locally, then all trajectories entrain, that is, converge, to this periodic trajectory. As a special case, if there exists an open set around an equilibrium in which the contraction property holds locally, then the equilibrium is globally asymptotically stable. We make these statements precise in this section.

\begin{figure}
  \centering
{\footnotesize 
  \begin{tikzpicture} [xscale=3, yscale=2,decoration={
    markings,
    mark=at position 0.25 with {\arrow[line width=1pt]{stealth}}}]
    \draw[black, fill=gray!5, draw=black] (.4,-.5) rectangle (3.1,1.5);
    \draw[gray, fill=gray!40] plot [smooth cycle, tension=.7,rotate around={15:(1.65,.45)}] coordinates { (.8,.25) (1.5,.9) (2,.4) (1.6,-.3)};

    \node[black] at (.5,-.3) {\footnotesize $\Domain$};
    \node[black] at (.85,.8) {\footnotesize $\zeta(t)$};
    \node[black] at (2.4,1.3) {\footnotesize $\mu(\tilde{J}(x,t))\leq 0$};
    \node[black] at (1.55,-.1) {\footnotesize$\mu(\tilde{J}(x,t))\leq -c$};
    \node[black] at (1.6,.15) {\footnotesize $\zeta(t^*)$};
    \node[fill=black, circle, inner sep=1pt] (a) at (1.55,.3) {};
    \node[fill=black, circle, inner sep=1pt] (b) at ($(1.55,.31)+ (24:1.4)$){};
    \draw[dashed, black,postaction={decorate}] plot[smooth cycle, tension=1.4] coordinates  {(a) ($(a)+(0,1)$) ($(a)+(-.6,.5)$) };
    \draw[black, line width=1pt] (a) to[bend right=20pt] node[below,pos=.5]{$\gamma(s)$}(b);
    \node[black] at ($(1.55,.31)+ (24:1.4)+(0,-.1)$){\footnotesize $x$};
    \draw[->, black, line width=.6pt] ($(a)+(0,.08)$) to[bend right=3pt] +($(5:.22)$);
    \draw[->, black, line width=.6pt] ($(a)+(0,.08)+(9:.45)$) to[bend left=3pt] +($(13:-.22)$);
  \end{tikzpicture}
}
  \caption{For a periodic, nonexpansive system with domain $\Domain$, $\mu(\tilde{J}(x,t))\leq 0$ for all $x,t$. If there exists a periodic trajectory $\xi(t)$ and a time $t^*$ such that $\mu(\tilde{J}(t^*,\zeta(t^*)))<0$, then there exists a neighborhood of $\zeta(t^*)$ and an interval of time during which the distance between any other trajectory and the periodic trajectory strictly decreases. It follows that all trajectories must entrain to the periodic trajectory. }
  \label{fig:fig1}
\end{figure}
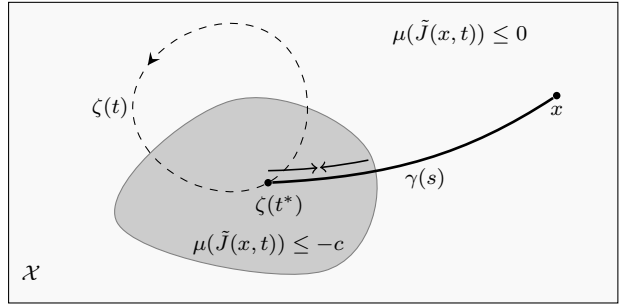

\newcommand{\Ball}{\mathcal{B}}
\begin{thm}
\label{thm:2}
Consider  $\dot{x}=f(t,x)$ for $x\in\Domain\subseteq \mathbb{R}^n$ where $f(t,x)$ is differentiable in $x$, and $f(t,x)$ and the Jacobian $J(t,x):= \frac{\partial f}{\partial x}(t,x)$ are continuous in $(t,x)$. Assume $\Domain$ is forward invariant and {convex}. Let $|\cdot|$ be a norm on $\mathbb{R}^n$ with induced matrix measure $\mu(\cdot)$ and let $\Theta:\X\to\mathbb{R}^{n\times n}$ be continuously differentiable and uniformly positive definite {on $\mathcal{X}$}.


Suppose $f(t,x)$ is $T$-periodic for some $T> 0$ so that $f(t,x)=f(t+T,x)$ for all $t$ and all $x\in\Domain${, and let }$\zeta(t)$ be a {T-}periodic trajectory of the system {so that $\zeta(t)=\zeta(t+T)$ for all $t$}. Define
\begin{align}
  \label{eq:77}
  \tilde{J}(t,x):=\dot{\Theta}(t,x)\Theta(x)^{-1}+\Theta(x)J(t,x)\Theta(x)^{-1}.
\end{align}
If $  \mu(\tilde{J}(t,x))\leq 0  $ for all $x\in\Domain$ and $t\geq 0$, and there exists a time $t^*$ such that
\begin{align}
  \label{eq:76}  \mu(\tilde{J}(t^*,\zeta(t^*)))<0,
\end{align}
then 
\begin{align}
\label{eq:82}  \lim_{t\to\infty} d(\phi(t,x_0),\zeta(t))=0\quad \text{for all $x_0\in\Domain$}
\end{align}
and all trajectories entrain to $\zeta(t)$, that is, 
{
\begin{align}
\label{eq:7}
  \lim_{t\to\infty} |\phi(t,x_0)-\zeta(t)|=0\quad \text{for all $x_0\in\Domain$.}
\end{align}
}
\end{thm}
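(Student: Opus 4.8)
The plan is to combine the nonexpansiveness estimate \eqref{eq:70} (with $c=0$, applied globally on $\X$) with a local contraction estimate near the periodic trajectory $\zeta(t)$, and then exploit periodicity to get a strict, uniform per-period decay that forces convergence. The first step is to record that, since $\mu(\tilde J(t,x))\le 0$ on all of $\X$, Proposition \ref{thm:1} with $c=0$ and $\mathcal{K}=\X$ gives $d(\phi(t,x_0),\zeta(t))\le d(x_0,\zeta(0))$ for all $t\ge 0$ and all $x_0\in\X$; in particular $t\mapsto d(\phi(t,x_0),\zeta(t))$ is nonincreasing, hence converges to some limit $\ell\ge 0$. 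It remains to show $\ell=0$. Because $\zeta$ and $f$ are $T$-periodic, the time-$T$ map $P:x_0\mapsto \phi(T,x_0)$ satisfies $d(P(x_0),\zeta(0))\le d(x_0,\zeta(0))$ and fixes $\zeta(0)$, and $d(\phi(kT,x_0),\zeta(0))\to\ell$.

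The core step is a quantitative local contraction: I would use continuity of $\mu(\tilde J(t,\cdot))$ together with \eqref{eq:76} to find, by a compactness/tube argument along the compact loop $\{\zeta(t):t\in[0,T]\}$, a neighborhood $\mathcal{N}$ of $\zeta(t^*)$, a $\delta>0$, and a $c_0>0$ such that $\mu(\tilde J(t,x))\le -c_0$ for $t$ in a small interval $I^*\ni t^*$ (mod $T$) and $x$ in $\mathcal{N}$, and such that any trajectory starting within distance $\delta$ of $\zeta(0)$ stays within distance, say, $2\delta$ of $\zeta(\cdot)$ and passes through $\mathcal{N}$ during $I^*$ — the last point following from nonexpansiveness plus continuity of solutions and of $\zeta$. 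Applying Proposition \ref{thm:1} in the form of Remark \ref{rem:invariant} on the (not necessarily invariant) set traced out near $\zeta$ during that subinterval, and patching with the global $c=0$ estimate on the complementary part of the period, yields a strict per-period contraction factor: there is $\lambda\in(0,1)$ and $\delta>0$ with $d(\phi(T,x_0),\zeta(0))\le \lambda\, d(x_0,\zeta(0))$ whenever $d(x_0,\zeta(0))\le\delta$. I would then argue the orbit eventually enters the $\delta$-ball: since $d(\phi(kT,x_0),\zeta(0))\downarrow\ell$, if $\ell>0$ pick $x_0'$ with $d(x_0',\zeta(0))$ close to $\ell$ (e.g. $x_0'=\phi(k_0T,x_0)$ for large $k_0$, so that $d(x_0',\zeta(0))<\max\{\delta,\ell+\varepsilon\}$), and if $\ell<\delta$ this gives $d(\phi(T,x_0'),\zeta(0))\le\lambda(\ell+\varepsilon)<\ell$ for $\varepsilon$ small, contradicting $d(\cdot,\zeta(0))\ge\ell$; the case $\ell\ge\delta$ is handled because then the strict decrease along the whole orbit (from \eqref{eq:76} being hit every period, even without smallness, giving a decrease bounded below on the compact annulus $\{\delta\le d(\cdot,\zeta(0))\le d(x_0,\zeta(0))\}$) drives $d$ below $\delta$ in finitely many periods. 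Either way $\ell=0$, establishing \eqref{eq:82}. Finally, \eqref{eq:7} follows from \eqref{eq:82} because $d$ is topologically equivalent to $|\cdot|$ on the relevant set: uniform positive definiteness of $\Theta$ and the length-minimizing definition of $d$ give $d(x,y)\ge \alpha_0|x-y|$ for some $\alpha_0>0$ (at least locally, which suffices since the orbit is eventually confined near $\zeta$), so $d(\phi(t,x_0),\zeta(t))\to0$ forces $|\phi(t,x_0)-\zeta(t)|\to0$.

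The main obstacle I anticipate is the compactness/tube argument making the local contraction uniform: one must ensure that \emph{every} nearby trajectory actually visits the good neighborhood $\mathcal{N}$ during the good time-window $I^*$ within one period, and that the resulting per-period factor $\lambda$ can be taken uniform over a whole $\delta$-ball rather than just infinitesimally. This requires care because $\Gamma_{\mathcal{K}}$-geodesics realizing $d$ may wander, so one should instead bound $\dot V$ along a conveniently chosen connecting curve (e.g. the image under the flow of a fixed initial geodesic from $x_0$ to $\zeta(0)$) and integrate \eqref{eq:83}, using the $-c_0$ bound on the portion of $[0,T]$ where the curve lies in $\mathcal{N}$ and the $0$ bound elsewhere — which is exactly what Remark \ref{rem:invariant} licenses. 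The periodicity is what converts a single instant of strict contraction into a genuine decay rate; handling the two cases $\ell<\delta$ and $\ell\ge\delta$ separately (smallness-based contraction versus compact-annulus uniform decrease) keeps the argument clean.
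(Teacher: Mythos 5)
Your overall strategy matches the paper's: use global nonexpansiveness ($c=0$ in Proposition \ref{thm:1}), extract by continuity a tube $B_{2\epsilon}(\zeta(t))$, $t\in[0,\tau]$, on which $\mu(\tilde J)\le -c<0$, work with the period map $P(\xi)=\phi(T,\xi)$, and finish with $d(x,y)\ge m|x-y|$ from uniform positive definiteness of $\Theta$. The near case ($d(\xi,\zeta^*)\le\epsilon$) is handled correctly: nonexpansiveness keeps the trajectory inside the $2\epsilon$-tube, so Remark \ref{rem:invariant} gives $d(\zeta^*,P(\xi))\le e^{-c\tau}d(\zeta^*,\xi)$.

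The gap is in your far case $\ell\ge\delta$. You assert a ``strict decrease along the whole orbit (from \eqref{eq:76} being hit every period, even without smallness), bounded below on the compact annulus.'' But a trajectory that stays at distance $\ge\delta$ from $\zeta$ never enters the contraction tube, so nothing in your argument establishes even pointwise strict decrease $d(P(\xi),\zeta^*)<d(\xi,\zeta^*)$ for such $\xi$ --- nonexpansiveness only gives $\le$, and without pointwise strictness the compactness step has nothing to uniformize. (Moreover the ``annulus'' $\{\delta\le d(\cdot,\zeta^*)\le D\}$ need not be compact: $\X$ is only assumed convex and forward invariant, not closed.) The strictness must come from the \emph{connecting curve}, not the trajectory: the paper takes a curve $\gamma$ from $\zeta^*$ to $\xi$ of length $\le d(\zeta^*,\xi)+\delta$, locates the point $\gamma(s_\epsilon)$ at distance exactly $\epsilon$ from $\zeta^*$, notes that the flow of \emph{that} point stays in the tube and so contracts toward $\zeta$ by $e^{-c\tau}$, while the segment from $\gamma(s_\epsilon)$ to $\xi$ is merely nonexpansive; the triangle inequality then yields the uniform additive decrement $d(\zeta^*,P(\xi))\le d(\zeta^*,\xi)-\delta$ with $\delta=(1-e^{-c\tau})\epsilon/2$, with no compactness needed. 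You do gesture at exactly this device in your final paragraph (flowing a fixed initial geodesic and splitting the $-c_0$ and $0$ bounds), but you present it as an anticipated obstacle rather than carrying it out, and your main-line argument relies instead on the unjustified annulus claim. Executing the curve-splitting step is what closes the proof.
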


\begin{proof}
 Without loss of generality, assume $t^*=0$. 
 Then, condition \eqref{eq:76} and continuity of $J(t,x)$, $\Theta(x)^{-1}$, and $\dot{\Theta}(t,x)$ imply there exists $\epsilon>0$, $c>0$, and $0<\tau\leq T$ such that
  \begin{align}
    \label{eq:600}
\mu(\tilde{J}(t,y))\leq -c   \qquad \forall t\in[0,\tau], \ \forall y\in B_{2\epsilon}(\zeta(t))
  \end{align}
where $\Ball_{2\epsilon}(y)=\{z\in\X:d(y,z){< 2\epsilon}\}$.   {Furthermore, with $\mathcal{K}:=\Ball_{2\epsilon}(y)$, note that $d(y,z)=d_{\mathcal{K}}(y,z)$ for all $z\in\Ball_{2\epsilon}(y)$.}

Define the mapping
\begin{align}
  \label{eq:108}
  P(\xi)=\phi(T,\xi)
\end{align}
and observe that $P^k(\xi)=\phi(kT,\xi)$. Let $\zeta^*=\zeta(0)$ and note that $\zeta^*$ is a fixed point of $P$. 

{First, }consider a point $\xi\in{\bar\Ball}_\epsilon(\zeta^*){:=\{z\in\mathcal{X}:d(y,z)\leq \epsilon\}}$. Let $x(t)=\phi(t,\xi)$ and note that $d(\zeta^* ,P(\xi))=d(\zeta(T) ,x(T))$.
We have $ d(\zeta(T),x(T))\leq d(\zeta(\tau), x(\tau))$ since $  \mu(\tilde{J}(t,x))\leq 0  $ for all $x\in\Domain$ and $t\geq 0$,
and, 
{by Remark \ref{rem:invariant} with $\mathcal{K}=B_{2\epsilon}(y)$,} \eqref{eq:600} implies
\begin{align}
  \label{eq:1017}
d(\zeta(\tau) ,x(\tau))\leq e^{-c\tau}d(\zeta(0) ,x(0)).  
\end{align}

Now consider $\xi\in \Domain$ such that $d(\zeta^*,\xi)> \epsilon$ and again let $x(t)=\phi(t,\xi)$.  Let $\delta=(1-e^{-c\tau})\epsilon/2$ and let $\gamma:[0,1]\to \Domain$ with $\gamma(0)=\zeta^*$ and $\gamma(1)=\xi$ be such that $\int_0^1|\Theta(\gamma(s))\gamma'(s)|ds\leq d(\zeta^*,\xi)+\delta$. Since $d(\zeta^*,\gamma(0))=0$ and $d(\zeta^*,\gamma(1))>\epsilon$, by continuity of $\gamma(s)$ and $d(\zeta^*,\cdot)$, there exists $s_\epsilon$ such that $d(\zeta^*,\gamma(s_\epsilon))=\epsilon$. Moreover,
\begin{align}
  \label{eq:78}
&  d(\zeta^*,\gamma(s_\epsilon))+d(\gamma(s_\epsilon),\xi)\\
&\leq \int_0^{s_\epsilon}|\Theta(\gamma(s))\gamma'(s)|ds+\int_{s_\epsilon}^{1}|\Theta(\gamma(s))\gamma'(s)|ds\\
&\leq d(\zeta^*,\xi)+\delta
\end{align}
where the first inequality follows from the definition of $d$ and the fact that arc-length is independent of reparameterizations of $\gamma$.

Let $\sigma(t)=\phi(t,\gamma(s_\epsilon))$.   By Proposition \ref{thm:1}, $\mu(\tilde{J}(t,x))\leq 0$ for all $x\in\Domain$ and  all $t\geq 0$ implies $d(\phi(t,x_0),\phi(t,y_0))\leq d(x_0,y_0)$ for all $x_0,y_0\in\Domain$. In particular, $d(\sigma(T) ,x(T))\leq d(\sigma(0),x(0))= d(\gamma(s_\epsilon),\xi)\leq d(\zeta^*,\xi)+\delta-\epsilon$. Furthermore, by the same argument as in the preceding case, we have $d(\zeta(T),\sigma(T))\leq e^{-c\tau}d(\zeta^*,\gamma(s_\epsilon))=e^{-c\tau}\epsilon$. Thus, by the triangle inequality,
\begin{align}
  \label{eq:1012}
  d(\zeta^*,P(\xi))&\leq d(\zeta(T),\sigma(T)) +d(\sigma(T) ,x(T))\\
&\leq d(\zeta^*,\xi)+\delta-(1-e^{-c\tau})\epsilon\\
&=d(\zeta^*,\xi)-\delta.
\end{align}
We thus have
\begin{align}
  \label{eq:1050}
  d(\zeta^*,P(\xi))\leq 
  \begin{cases}
    d(\zeta^*,\xi)-\delta&\text{if }d(\zeta^*,\xi)>\epsilon\\
    e^{-c\tau}d(\zeta^*,\xi)&\text{if }d(\zeta^*,\xi)\leq \epsilon.
  \end{cases}
\end{align}
It follows that, for all $\xi$, $d(\zeta^*,P^k(\xi))\leq \epsilon$ for some finite $k$ (in particular, for any $k\geq d(\zeta^*,\xi)/\delta$). The second condition of \eqref{eq:1050}  ensures $d(\zeta^*,P^k(\xi))\to 0$ as $k\to \infty$ so that \eqref{eq:82} holds. 
{Moreover, since $\Theta(x)$ is uniformly positive definite on $\X$ by hypothesis, there exists a constant $m>0$ such that $|\Theta(x)z|\geq m|z|$ for any $x, z$. It follows that $d(x,y)\geq \inf_{\gamma\in\Gamma(x,y)}\int_{0}^1m|{\gamma'}(s)| ds=m|y-x|$ where the equality holds because $\X$ is convex. Thus, \eqref{eq:82} implies \eqref{eq:7}.}
\end{proof}

Theorem \ref{thm:2} and its proof are illustrated in Figure \ref{fig:fig1}.

Note that, as a consequence of the global entrainment property, {$\zeta(t)$} in the statement of Theorem \ref{thm:2} is the unique periodic trajectory of a system satisfying the hypotheses of the theorem.

Theorem \ref{thm:2} is closely related to existing results in the literature, although we believe the generality provided by Theorem \ref{thm:2} is novel. In particular, \cite[Lemma 6]{Lovisari:2014yq} provides a similar result for $\mu(\cdot)$ restricted to the matrix measure induced by the $\ell_1$ norm under the assumption that \eqref{eq:1} is monotone and time-invariant. A similar technique is applied to periodic trajectories of a class of monotone flow networks in \cite[Proposition 2]{Lovisari:2014qv}, but a general formulation is not presented. 

While Theorem \ref{thm:2} is interesting in its own right, in this paper, {we are primarily interested in obtaining Lyapunov functions for monotone systems. Thus,} our main interest is in the following Corollary, which specializes Theorem \ref{thm:2} to time-invariant systems. 

\begin{cor}
\label{cor:nonexpeq}
Consider  $\dot{x}=f(x)$ for $x\in\Domain\subseteq \mathbb{R}^n$ for continuously differentiable $f(x)$. Assume $\Domain$ is forward invariant and {convex}. Let $|\cdot|$ be a norm on $\mathbb{R}^n$ with induced matrix measure $\mu(\cdot)$ and let $\Theta:\X\to\mathbb{R}^{n\times n}$ be continuously differentiable and uniformly positive definite {on $\X$}.  Let $x^*$ be an equilibrium of the system and define
\begin{align}
  \label{eq:77}
  \tilde{J}(x):=\dot{\Theta}(x)\Theta(x)^{-1}+\Theta(x)J(x)\Theta(x)^{-1}
\end{align}
where $J(x)=\frac{\partial f}{\partial x}(x)$. If
\begin{align}
  \label{eq:79}
  \mu(\tilde{J}(x))&\leq 0 \quad \text{for all }x\in\Domain,\text{ and}\\
  \label{eq:79-2}  \mu(\tilde{J}(x^*))&<0,
\end{align}
then $x^*$ is unique and globally asymptotically stable. Moreover,
  $d(x,x^*)$
is a global Lyapunov function and 
$  |\Theta(x)f(x)|$
is a local Lyapunov function. If $  |\Theta(x)f(x)|$ is globally proper, then it is also a global Lyapunov function.%
\end{cor}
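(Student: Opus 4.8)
The plan is to derive Corollary~\ref{cor:nonexpeq} from Theorem~\ref{thm:2} by viewing a time-invariant system as a degenerate periodic one, and then to translate the entrainment statement \eqref{eq:7} into the three Lyapunov-function claims. First I would observe that $\dot x = f(x)$ is trivially $T$-periodic for any choice of $T>0$, and that the constant curve $\zeta(t)\equiv x^*$ is a $T$-periodic trajectory since $x^*$ is an equilibrium. The hypotheses \eqref{eq:79}--\eqref{eq:79-2} are precisely the hypotheses $\mu(\tilde J(t,x))\le 0$ on all of $\Domain$ together with $\mu(\tilde J(t^*,\zeta(t^*)))<0$ of Theorem~\ref{thm:2} (with $t^*$ arbitrary, say $t^*=0$, and $\tilde J$ here not depending on $t$). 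Hence Theorem~\ref{thm:2} applies and yields $\lim_{t\to\infty} d(\phi(t,x_0),x^*)=0$ and $\lim_{t\to\infty}|\phi(t,x_0)-x^*|=0$ for all $x_0\in\Domain$, and also the uniqueness of $x^*$ as the unique periodic --- hence the unique equilibrium --- of the system. This already gives global attractivity.

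Next I would establish that $V(x):=d(x,x^*)$ is a global Lyapunov function in the sense of Definition~\ref{def:lyap}. Continuity of $d(\cdot,x^*)$ is standard for a Finsler distance. Positive definiteness follows because, by uniform positive definiteness of $\Theta$, there is $m>0$ with $|\Theta(x)z|\ge m|z|$, so $d(x,x^*)\ge m|x-x^*|>0$ for $x\ne x^*$ (using convexity of $\Domain$ so the straight segment is admissible), while $d(x^*,x^*)=0$. For the decrease condition, Corollary of Proposition~\ref{thm:1} with $c=0$ gives $d(\phi(t,x_0),x^*)=d(\phi(t,x_0),\phi(t,x^*))\le d(x_0,x^*)$ for all $t\ge 0$, i.e. $V$ is nonincreasing along trajectories; combined with the strict decrease argument already embedded in the proof of Theorem~\ref{thm:2} (for any $x_0\ne x^*$, after finitely many periods the trajectory enters $\bar\Ball_\epsilon(x^*)$ and then $d$ strictly contracts), one gets that for any $x_0\ne x^*$ there is $\tau>0$ with $V(\phi(\tau,x_0))<V(x_0)$ and $V(\phi(t,x_0))\le V(x_0)$ for all $t\in(0,\tau]$ --- this is exactly condition (iii). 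Global properness of $V$ follows from $m|x-x^*|\le d(x,x^*)$, which forces sublevel sets of $V$ to be bounded, hence compact since $\Domain$ is closed in the induced topology.

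For the flow function $W(x):=|\Theta(x)f(x)|$, I would invoke the time-invariant conclusion \eqref{eq:71} of Proposition~\ref{thm:1}: applying it on a forward-invariant neighborhood $\mathcal{K}$ of $x^*$ on which $\mu(\tilde J(x))\le -c<0$ (such a $\mathcal{K}$ exists by continuity, shrinking if necessary to a sublevel set of $d(\cdot,x^*)$ to guarantee forward invariance), we get $W(\phi(t,x_0))\le e^{-ct}W(x_0)$ for $x_0\in\mathcal{K}$, which is strictly decreasing whenever $W(x_0)\ne 0$; and $W(x)=0$ iff $f(x)=0$ iff $x=x^*$ on $\mathcal{K}$ by uniqueness of the equilibrium. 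Continuity and positive definiteness of $W$ on $\mathcal{K}$ are immediate, and properness at $x^*$ holds because the decrease estimate confines small sublevel sets to $\mathcal{K}$; this makes $W$ a local Lyapunov function. When $W$ is additionally globally proper, the same exponential-decay estimate holds globally (since then $\mu(\tilde J(x))\le 0$ everywhere gives $W$ nonincreasing, and the local strict decrease near $x^*$ together with global attractivity forces $W(\phi(t,x_0))\to 0$ and strict eventual decrease), upgrading it to a global Lyapunov function.

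The main obstacle is the strict-decrease condition (iii) for $V=d(\cdot,x^*)$: Proposition~\ref{thm:1} with $c=0$ only yields nonincrease, not strict decrease, so I cannot get (iii) directly and must reuse the finite-time-entry-plus-local-contraction mechanism from the proof of Theorem~\ref{thm:2}, being careful that the strict drop occurs at \emph{some} time $\tau>0$ rather than instantaneously. A secondary technical point is ensuring the neighborhood $\mathcal{K}$ used for $W$ can be taken forward invariant; the clean way is to let $\mathcal{K}=\{x: d(x,x^*)\le r\}$ for small $r$, which is forward invariant precisely because $d(\cdot,x^*)$ is nonincreasing along trajectories, and is contained in the region where $\mu(\tilde J)<0$ for $r$ small by continuity.
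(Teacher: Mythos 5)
Your overall route is the same as the paper's: treat $\dot x=f(x)$ as vacuously $T$-periodic with $\zeta(t)\equiv x^*$, invoke Theorem~\ref{thm:2} for convergence and uniqueness, use Proposition~\ref{thm:1} with $c=0$ for nonincrease of $d(\cdot,x^*)$ and $|\Theta(\cdot)f(\cdot)|$, and use uniform positive definiteness of $\Theta$ plus convexity of $\Domain$ to get $d(x,x^*)\geq m|x-x^*|$ and hence global properness of $d(\cdot,x^*)$. For condition (iii) you reuse the finite-time-entry-plus-local-contraction mechanism; the paper gets there more cheaply by noting that a nonincreasing function that tends to zero along every trajectory automatically satisfies (iii) of Definition~\ref{def:lyap} (no instantaneous strict decrease is needed under that definition). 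That part of your argument is fine, just heavier than necessary.

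The one step that does not work as written is your justification of local properness of $W(x)=|\Theta(x)f(x)|$: you assert that ``the decrease estimate confines small sublevel sets to $\mathcal{K}$.'' The estimate $W(\phi(t,x_0))\leq e^{-ct}W(x_0)$ constrains the evolution of $W$ along trajectories; it says nothing about the geometry of the sets $\{x: W(x)\leq \epsilon\}$, and in particular does not rule out that $W$ is arbitrarily small at points on the boundary of $\mathcal{K}$, which would prevent small sublevel sets from being compactly contained in a neighborhood of $x^*$. What actually closes this step is either (a) the paper's argument: $\mu(\tilde J(x^*))<0$ implies $\tilde J(x^*)$ is Hurwitz, hence so is the similar matrix $J(x^*)$, hence $J(x^*)$ is invertible and a first-order expansion gives $|f(x)|\geq \tfrac{\epsilon_3}{2}|x-x^*|$ near $x^*$, from which properness of $W$ at $x^*$ (and positive definiteness of $W$ there) follows quantitatively; or (b) a compactness argument you have all the ingredients for but did not state: on the compact sublevel set $\mathcal{K}=\{x: d(x,x^*)\leq r\}$, $W$ is continuous and vanishes only at $x^*$ (by uniqueness of the equilibrium), so $W$ is bounded below by a positive constant on $\mathcal{K}\setminus\{x: d(x,x^*)< r/2\}$, which confines small sublevel sets within $\mathcal{K}$ near $x^*$. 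Either repair is short, but the reason you gave is not the reason the step is true.
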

\begin{proof}
  Choose any $T>0$, for which $f(x)$ is then (vacuously) $T$-periodic, and $\zeta(t):=x^*$ is trivially a periodic trajectory so that Theorem \ref{thm:2} applies. Note that we may take $\tau=T$ in the proof of the theorem. {Therefore,  $\lim_{t\to\infty}d(\phi(t,x_0),x^*)= 0$ for all $x_0\in\X$. Moreover, as in the proof of Theorem \ref{thm:2}, there exists $m>0$ such that $d(x,y)\geq m|y-x|$ because $\Theta(x)$ is uniformly positive definite and $\mathcal{X}$ is convex so that $d(x,x^*)$ is globally proper. 

}

{We now show that $|\Theta(x)f(x)|$ is proper. Since $\Theta(x)$ is uniformly positive definite, 
it is sufficient to show $|f(x)|$ is proper. Let $f(x)=J(x^*)(x-x^*)+g(x)$ for appropriately defined $g$ satisfying that for any $\epsilon_1>0$, there exists $\epsilon_2>0$ such that $|g(x)|<\epsilon_1 |x-x^*|$ for all $x$ satisfying $|x-x^*|<\epsilon_2$. Moreover, \eqref{eq:79-2} implies $\tilde{J}(x^*)$ is Hurwitz and since $\tilde{J}(x^*)$ is related to $J(x^*)$ via a similarity transform, it follows that $J(x^*)$ is Hurwitz and thus invertible. Therefore, there exists $\epsilon_3>0$ for which $|J(x^*)(x-x^*)|\geq \epsilon_3 |x-x^*|$ for all $x\in\X$. Then, $|f(x)|\geq |J(x^*)(x-x^*)|-|g(x)|\geq \epsilon_3|x-x^*|-|g(x)|$ for all $x\in\X$. Choosing $\epsilon_1=\epsilon_3/2$ implies that $|f(x)|\geq \frac{\epsilon_3}{2}|x-x^*|$ for all $x\in\X$ satisfying $|x-x^*|<\epsilon_2$, and thus $|f(x)|$ is proper.}


From Proposition \ref{thm:1} with $c=0$, we have that $d(x,x^*)$ and $|\Theta(x)f(x)|$ are nonincreasing by \eqref{eq:70} and \eqref{eq:71}, and both tend to zero along trajectories of the system, thus they both satisfy condition (iii) of Definition \ref{def:lyap}. {Observing that $d(x,x^*)$ and $|\Theta(x)f(x)|$ are both positive definite completes the proof, where we remark that global asymptotic stability of $x^*$ follows from the existence of a global Lyapunov function.}
\end{proof}

\section{Proof of main result}
\label{sec:proof-main-result}
We are now in a position to prove our main results, Theorems \ref{thm:tvmain1} and \ref{thm:tvmain2}. We begin with the following proposition, which shows that $d(x,y)$ as in \eqref{eq:64} can be obtained explicitly when $\Theta(x)$ is a diagonal, state-dependent weighting matrix.

\begin{prop}
Let $\X\subseteq \mathbb{R}^n$ be a rectangular set and for any $x,y\in\mathcal{X}$, let $\Gamma(x,y)$ be the set of piecewise continuously differentiable curves $\gamma:[0,1]\to \mathcal{X}$ connecting $x$ to $y$ {within $\X$} so that $\gamma(0)=x$ and $\gamma(1)=y$.  Consider $\Theta:\X\to \mathbb{R}^{n\times n}$ given by $\Theta(x)=\textnormal{diag}\{\theta_1(x_1),\ldots,\theta_n(x_n)\}$ where $\{\theta_i(\cdot)\}_{i=1}^n$ is a collection of nonnegative continuously differentiable functions. Suppose there exists $c>0$ such that $\theta_i(x_i)>c$ for all $x\in\mathcal{X}$ for all $i=1,\ldots,n$. Let $|\cdot|$ be a norm on $\mathbb{R}^n$ and consider the metric $d(x,y)$ given by \eqref{eq:64}. 
Then
\begin{align}
  \label{eq:65}
d(x,y)=
\left|  \begin{bmatrix}
    \int_{x_1}^{y_1}\theta_i(\sigma_1)d\sigma_1&\ldots&    \int_{x_n}^{y_n}\theta_i(\sigma_n)d\sigma_n
  \end{bmatrix}^T\right|.
\end{align}
\end{prop}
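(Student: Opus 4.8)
The plan is to show that the infimum in the definition \eqref{eq:64} of $d(x,y)$ is attained (in the limit) by curves that are "monotone in each coordinate," and that along any such curve the integral $\int_0^1 |\Theta(\gamma(s))\gamma'(s)|\,ds$ equals the right-hand side of \eqref{eq:65}, so that in fact every admissible curve has length at least that value, with equality achievable. First I would introduce the change of variables $u_i = \Psi_i(x_i) := \int_{x_i^*}^{x_i}\theta_i(\sigma)\,d\sigma$ — or, more simply for this statement, $u_i = \int_{a_i}^{x_i}\theta_i(\sigma)\,d\sigma$ for any fixed basepoint $a_i\in\X_i$. Since $\theta_i$ is continuous and $\theta_i > c > 0$, each $\Psi_i$ is a $C^1$ diffeomorphism from the interval $\X_i$ onto its image (an interval), strictly increasing. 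In these coordinates, writing $\tilde\gamma(s) = (\Psi_1(\gamma_1(s)),\ldots,\Psi_n(\gamma_n(s)))$, the chain rule gives $\tilde\gamma_i'(s) = \theta_i(\gamma_i(s))\gamma_i'(s)$, so that $\Theta(\gamma(s))\gamma'(s) = \tilde\gamma'(s)$ exactly. Hence
\begin{align}
  \label{eq:propplan1}
  \int_0^1 |\Theta(\gamma(s))\gamma'(s)|\,ds = \int_0^1 |\tilde\gamma'(s)|\,ds,
\end{align}
which is just the ordinary arc length (in the norm $|\cdot|$) of the curve $\tilde\gamma$ joining $\Psi(x)$ to $\Psi(y)$ inside the rectangular set $\Psi(\X) = \prod_i \Psi_i(\X_i)$.

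Next I would invoke the standard fact that in a \emph{convex} set (and $\Psi(\X)$ is a product of intervals, hence convex), the distance induced by any norm via the arc-length functional is just the norm of the difference: $\inf_{\tilde\gamma}\int_0^1|\tilde\gamma'(s)|\,ds = |\Psi(y)-\Psi(x)|$, with the straight-line segment $\tilde\gamma(s) = (1-s)\Psi(x) + s\Psi(y)$ achieving it, and the lower bound following from $|\Psi(y)-\Psi(x)| = |\int_0^1\tilde\gamma'(s)\,ds| \le \int_0^1|\tilde\gamma'(s)|\,ds$ by the triangle inequality for the norm (in integral form). Since the correspondence $\gamma \mapsto \tilde\gamma = \Psi\circ\gamma$ is a bijection between $\Gamma(x,y)$ and the piecewise-$C^1$ curves in $\Psi(\X)$ from $\Psi(x)$ to $\Psi(y)$ (using that $\Psi$ and $\Psi^{-1}$ are $C^1$, so piecewise-$C^1$ curves correspond), the two infima agree and
\begin{align}
  \label{eq:propplan2}
  d(x,y) = |\Psi(y) - \Psi(x)|,
\end{align}
where the $i$th component of $\Psi(y)-\Psi(x)$ is $\int_{a_i}^{y_i}\theta_i(\sigma)\,d\sigma - \int_{a_i}^{x_i}\theta_i(\sigma)\,d\sigma = \int_{x_i}^{y_i}\theta_i(\sigma)\,d\sigma$, independent of the basepoint $a_i$. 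This is exactly \eqref{eq:65}.

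The main obstacle — really the only point requiring care — is the bookkeeping around the change of variables: one must check that $\Psi$ genuinely maps $\X$ onto a convex (rectangular) set so that the straight segment in $u$-coordinates stays inside $\Psi(\X)$ and pulls back to an admissible curve in $\X$, and that the preimage of a piecewise-$C^1$ curve is again piecewise-$C^1$ (both follow from $\theta_i\in C^1$, $\theta_i\ge c>0$, so $\Psi_i^{-1}$ is $C^1$ with bounded derivative $1/(\theta_i\circ\Psi_i^{-1})$). I would also note that the lower bound argument via the triangle inequality does not need the curve to be monotone coordinatewise — it holds for \emph{every} admissible $\gamma$ — so no regularity or monotonicity reduction of the competitor curves is needed; the convexity of $\Psi(\X)$ does all the work. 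One should briefly remark that $\X$ being rectangular (not merely convex) is what guarantees $\Psi(\X)$ is a product of intervals, which is needed because $\Psi$ acts coordinatewise.
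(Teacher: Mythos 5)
Your proposal is correct and takes essentially the same route as the paper: the paper also defines the transformed curve $\tilde\gamma$ by $\tilde\gamma_i'(s)=\theta_i(\gamma_i(s))\gamma_i'(s)$ (which is exactly your coordinatewise diffeomorphism $\Psi$ up to an additive constant), identifies the image endpoint $z$ with components $\int_{x_i}^{y_i}\theta_i(\sigma)\,d\sigma$, and concludes via the straight-line segment and the triangle inequality that $d(x,y)=|z-x|$. Your explicit framing of $\Psi$ as a global diffeomorphism onto a rectangular (hence convex) image, with the bijection of admissible curves, is a slightly tidier packaging of the same argument, including the same use of rectangularity to keep the minimizing curve inside $\X$.
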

\begin{proof}
Given any curve $\gamma\in\Gamma(x,y)$, define a new curve $\tilde{\gamma}(s):[0,1]\to\mathbb{R}^n$ as follows. Let $\tilde{\gamma} (0)=x$ and let ${\tilde{\gamma}_i}'(s)=\theta_i(\gamma_i(s)){\gamma'_i}(s)$ for all $i$ so that
\begin{align}
  \label{eq:68}
  \int_0^1|\Theta(\gamma(s)){\gamma'}(s)| ds=\int_0^1|\tilde{\gamma}'(s)| ds.
\end{align}
{That is, if we understand the left-hand side of \eqref{eq:68} to be the length of the curve $\gamma$ computed using the metric of \eqref{eq:64}, then this length is equal to the standard arc-length of $\tilde{\gamma}$ using the metric induced by the norm $|\cdot|$.} 
Observe that, for all $i$,
\begin{align}
  \label{eq:67}
 \tilde{\gamma}_i(1)-x_i&=\int_0^1\theta_i(\gamma_i(s)){\gamma'_i}(s)ds=\int_{x_i}^{y_i}\theta_i(\sigma )d\sigma.
\end{align}
Then $z:=\tilde{\gamma}(1)$ is a point depending only on $x$ and $y$ and is independent of the particular curve $\gamma$. {That is, for every $x,y\in\mathcal{X}$, there exists a unique $z\in\mathbb{R}^n$ such that each curve $\gamma$ connecting $x$ to $y$ generates a curve $\tilde{\gamma}$ as defined above connecting $x$ to $z$. Observe that}
\begin{align}
  \label{eq:80}
  \inf_{\tilde{\gamma}\in\Gamma(x,z)}\int_{0}^1\left|\begin{bmatrix}{\tilde\gamma'_1}(s)&\ldots&{\tilde\gamma'_n}(s)\end{bmatrix}^T\right| ds&=|z-x|
\end{align}
{where the infimum is achieved for} $\tilde{\gamma}(s)=(1-s)x+sz${, and the equality follows since} $\tilde{\gamma}'(s)=z-x$. {Furthermore, this particular minimizing $\tilde{\gamma}$ is generated from the unique curve} $\gamma\in\Gamma(x,y)$ satisfying the decoupled system of differential equations $z_i-x_i=\theta_i(\gamma_i(s))\gamma_i'(s)$ for all $i$. Moreover, $\gamma_i(s)$ is monotonic for all $i$ and all $s\in[0,1]$ since $\gamma_i'(s)$ does not change sign, and thus $\gamma(s)$ is contained in the rectangle defined by the corners $x$ and $y$ so that $\gamma(s)\in\Domain$ for $s\in[0,1]$. It follows that $d(x,y)=|z-x|$, which is equivalent to \eqref{eq:65}.
\end{proof}

\textbf{Proof of Theorem \ref{thm:tvmain1}.}
Let 
\begin{align}
  \label{eq:85}
  \Theta(x):=\textnormal{diag}\{\theta_1(x_1),\ldots,\theta_n(x_n)\}
\end{align}
and consider the $\ell_1$ norm $|\cdot|_1$ with induced matrix measure $\mu_1(\cdot)$. Define $\tilde{J}(x)$ as in \eqref{eq:77}. %
Recall ${J}(x)=\frac{\partial f}{\partial x}(x)$ is Metzler since the system is monotone. Then also $\tilde{J}(x)$ is Metzler since $\Theta(x)$ is diagonal and uniformly positive definite so that $\Theta(x)J(x)\Theta(x)^{-1}$ retains the sign structure of $J(x)$ and $\dot{\Theta}(x)\Theta(x)^{-1}$ is a diagonal matrix.

Then
\begin{align}
  \label{eq:75}
  \mu_1(\tilde{J}(x))%
&=\max_{j=1\ldots,n}\left(\sum_{i=1}^n\tilde{J}_{ij}(x)\right)\\
  \label{eq:75-2}&=\max_{j=1\ldots,n}\Bigg(\Bigg(\dot{\theta}_j(x) +\sum_{i=1}^n\theta_{i}(x){J}_{ij}(x)\Bigg) \theta_j(x)^{-1}\Bigg)
\end{align}
where the first equality follows by \eqref{eq:10} since $\tilde{J}(x)$ is Metzler. It follows from \eqref{eq:111} and \eqref{eq:75-2} that $\mu_1(\tilde{J}(x))\leq 0$ for all $x\in\Domain$, {and it follows from \eqref{eq:114} and \eqref{eq:75-2} that $\mu_1(\tilde{J}(x^*))<0$ because $\dot{\theta}_j(x^*)=0$ and $\theta_j(x_j^*)^{-1}>0$ for all $j$.}
Applying Corollary \ref{cor:nonexpeq} establishes that \eqref{eq:113} is a global Lyapunov function and \eqref{eq:113-2} is a local Lyapunov function.
\qed 

\textbf{Proof of Theorem \ref{thm:tvmain2}.}
Let $\theta_i(x):=1/\omega_i(x)$ and define
\begin{align}
  \label{eq:87}
  \Omega(x)&:=\textnormal{diag}\{\omega_1(x_1),\ldots,\omega_n(x_n)\}\\
\Theta(x)&:=\textnormal{diag}\{\theta_1(x_1),\ldots,\theta_n(x_n)\}.
\end{align}
Since $0< \omega_i(x)\leq c$ for all $x\in\mathcal{X}$ for $i=1,\ldots,n$, we have $0<c^{-1}\leq \theta_i(x)$ for all $x\in\mathcal{X}$ for $i=1,\ldots,n$. 

Observe that %
\begin{align}
  \label{eq:88}
\dot{\theta}_i(x)\theta_i(x_i)^{-1}=\frac{\frac{d\theta_i}{d x_i}(x_i)}{\theta_i(x_i)}f_i(x)&=\frac{-\frac{d\omega_i}{d x_i}(x_i)}{\omega_i(x_i)}f_i(x)\\
  \label{eq:88-2}&=-\dot{\omega}_i(x)\omega_i(x_i)^{-1}
\end{align}
for all $i=1,\ldots,n$ where the second equality is established from the identity
\begin{align}
  \label{eq:6}
  \frac{d\omega_i}{d x_i}(x_i)=\frac{d}{d x_i}\left(\frac{1}{\theta_i(x_i)}\right)=\frac{-(d\theta_i/dx_i)(x_i)}{(\theta_i(x_i))^2}.
\end{align}
As in the proof of Theorem {\ref{thm:tvmain1}}, define  $\tilde{J}(x)$ according to \eqref{eq:77},
and now consider the $\ell_\infty$ norm $|\cdot|_\infty$ with induced matrix measure $\mu_\infty(\cdot)$.  As before, $\tilde{J}(x)$ is Metzler so that
\begin{align}
  \label{eq:91}
&  \mu_{\infty}(\tilde{J}(x))=\max_{i=1,\ldots,n}\sum_{j=1}^n\tilde{J}_{ij}(x)\\
&=\max_{i=1,\ldots,n}\Bigg(\dot{\theta}_i(x)\theta_i(x)^{-1}+\theta_i(x_i)\sum_{j=1}^nJ_{ij}(x)\theta_j(x_j)^{-1}\Bigg)\\
  \label{eq:91-3}&=\max_{i=1,\ldots,n}\Bigg(\omega_i(x_i)^{-1}\Bigg(-\dot{\omega}_i(x)+\sum_{j=1}^nJ_{ij}(x)\omega_j(x_j)\Bigg)\Bigg)
\end{align}
where the first equality follows from \eqref{eq:10-2} and the last equality follows from \eqref{eq:88}--\eqref{eq:88-2}. It follows from \eqref{eq:45} and \eqref{eq:91-3} that $\mu_\infty(\tilde{J}(x))\leq 0$ for all $x\in\Domain$, and {it follows from \eqref{eq:48} and \eqref{eq:91-3} that $\mu_\infty(\tilde{J}(x^*))<0$ because $\dot{\omega}_i(x^*)=0$ and $\omega_i(x_i^*)^{-1}>0$ for all $i$.}
 Applying Corollary \ref{cor:nonexpeq} establishes that \eqref{eq:47} is a global Lyapunov function and \eqref{eq:47-2} is a local Lyapunov function. %
\qed

\textbf{Proofs of Corollaries \ref{cor:1} and \ref{cor:2}.}
{We need only show that \eqref{eq:17-2} (respectively, \eqref{eq:20}) is a global Lyapunov function when the stated conditions hold, as the remainder of the claims follow immediately from Theorems \ref{thm:tvmain1} and \ref{thm:tvmain2}.  Below, we show that \eqref{eq:17-2} is globally proper in this case. A symmetric argument establishes that \eqref{eq:20} is globally proper.}



%

To this end, suppose {$v^TJ(x)\leq -c\mathbf{1}^T$ for all $x\in\mathcal{X}$ for some $c>0$.} Then there also exists some $\tilde{c}>0$ such that  $v^TJ(x)\leq -\tilde{c} v$ for all $x\in \mathcal{X}$. In particular, we take $0<\tilde{c} \leq c/|v|_\infty$.  With $\Theta:=\text{diag}\{v\}$, it follows that $\mu_1(\Theta J(x)\Theta^{-1})\leq -\tilde{c}$ for all $x\in\Domain$. 

From a slight modification of \cite[Theorem 33, pp. 34--35]{Desoer:2008bh}, we then have that $|\Theta f(x)|_1\geq \tilde{c} |\Theta x|_1$, which implies that $|\Theta f(x)|_1$ is globally proper. For completeness, we repeat this argument here. Since $\Theta f(x)=\int_0^1\Theta J(s x) x \ ds$,
\begin{align}
  \label{eq:106}
  |\Theta f(x)|_1&\geq -\mu_1\left(\int_0^1\Theta J(s x )\Theta^{-1}  \ ds\right) \left|\Theta x\right|_1\\
&\geq -\left(\int_0^1\mu_1(\Theta J(s x )\Theta^{-1})ds \right)|\Theta x|_1\\
&\geq \tilde{c} |\Theta x|_1
\end{align}
 where the first inequality follows from the fact that $|Ax|\geq -\mu(A)|x|$ for all $A$ and all $x$ where $|\cdot|$ and $\mu(\cdot)$ are any vector norm and corresponding matrix measure, and the second inequality follows from the fact that $\mu(A+B)\leq \mu(A)+\mu(B)$ for all $A$, $B$ (see \cite{Desoer:2008bh} for a proof of both these facts). Since \eqref{eq:17-2} is equivalent to $|\Theta f(x)|_1$, the proof is complete.  
\qed

\section{An algorithm for computing separable Lyapunov functions}
\label{sec:an-algor-comp}
In this section, we briefly discuss an efficient and scalable algorithm for computing $\theta(x)$ and $\omega(x)$ in Theorems \ref{thm:tvmain1} and \ref{thm:tvmain2} when each element of $f(x)$ is a polynomial or rational function of $x$. %
Thus, the proposed approach provides an efficient means for computing sum-separable and max-separable Lyapunov functions of monotone systems.

Our proposed algorithm relies on \emph{sum-of-squares (SOS) programming} \cite{Parrilo:2000fk, Parrilo:2001uq}. A polynomial $s(x)$ is a \emph{sum-of-squares polynomial} if $s(x)=\sum_{i=1}^r(g_i(x))^2$ for some polynomials $g_i(x)$ for $i=1,\ldots,r$. A \emph{SOS feasibility problem} consists in finding a collection of polynomials $p_i(x)$ for $i=1,\ldots,N$ and a collection of SOS polynomials $s_i(x)$ for $i=1,\ldots,M$ such that 
\begin{align}
  \label{eq:96}
  a_{0,j}+\sum_{i=1}^Np_i(x)a_{i,j}(x)+\sum_{i=1}^Ms_i(x)b_{i,j}(x)&\\
  \label{eq:96-2}\qquad\text{are SOS polynomials for $j=1,\ldots,J$}&
\end{align}
for fixed polynomials $a_{0,j}(x)$, $a_{i,j}(x)$, and $b_{i,j}(x)$ for all $i,j$. The set of polynomials $\{p_i(x)\}_{i=1}^N$ and $\{s_i(x)\}_{i=1}^M$ satisfying \eqref{eq:96}--\eqref{eq:96-2} forms a convex set, and by fixing the degree of these polynomials, we arrive at a finite dimensional convex optimization problem. There exists efficient computational toolboxes that convert SOS feasibility programs into standard semi-definite programs (SDP) \cite{sostools}. SOS programming has led to efficient computational algorithms for a number of controls-related applications such as searching for polynomial Lyapunov functions \cite{Papachristodoulou:2002jk}, underapproximating regions of attraction \cite{Topcu:2007fk}, and safety verification \cite{Coogan:2015dq}.

Here, we present a SOS feasibility problem that is sufficient for finding $\theta(x)$ or $\omega(x)$ satisfying Theorem \ref{thm:tvmain1} or \ref{thm:tvmain2}. To that end, recall that in the hypotheses of these theorems, we assume $\Domain$ is rectangular. In this section, for simplicity, we assume $\X$ is a closed set with nonzero measure so that $\X=\mathbf{cl}\{x: a_i< x_i< b_i \text{ for all $i$}\}$ for appropriately defined $a_i$ and $b_i$ where possibly $a_i=-\infty$ and/or $b_i=\infty$  and where $\mathbf{cl}$ denotes closure. Equivalently, $\X=\{x:d(x)\geq 0\}$ where $d(x)=\begin{bmatrix}d_1(x_1)&\ldots&d_n(x_n)\end{bmatrix}^T$ and
\begin{align}
  \label{eq:92}
  d_i(x_i)=
  \begin{cases}
    x_i-a_i&\text{if $a_i\neq -\infty$ and $b_i=\infty$}\\
    (x_i-a_i)(b_i-x_i)&\text{if $a_i\neq -\infty$ and $b_i\neq \infty$}\\
    b_i-x_i&\text{if $a_i= -\infty$ and $b_i\neq \infty$}\\
    0&\text{if $a_i= -\infty$ and $b_i= \infty$}.
  \end{cases}
\end{align}

If $\sigma(x)=\begin{bmatrix}\sigma_{1}(x) & \ldots & \sigma_n(x)\end{bmatrix}^T$ where each $\sigma_j(x)$, $j=1\ldots n$ is a SOS polynomial, we call $\sigma(x)$ a \emph{SOS $n$-vector}. 

\begin{prop}
  Let \eqref{eq:1} be a monotone system with equilibrium $x^*$ and rectangular domain $\mathcal{X}=\{x:d(x)\geq 0\}$ where $d(x)=\begin{bmatrix}d_1(x_1)&\ldots&d_n(x_n)\end{bmatrix}^T$. Suppose each $f_i(x)$ is polynomial. Then the following is a SOS feasibility problem, and a feasible solution provides $\theta(x)$ satisfying the conditions of Theorem \ref{thm:tvmain1}:
{
\begin{alignat}{2}
\nonumber &\textnormal{For some fixed $\epsilon>0$,}\\
\nonumber &\textnormal{Find:}\\
\nonumber &\quad\textnormal{Polynomials $\theta_i(x_i)$, for $i=1,\ldots,n$} \\
\nonumber &\quad\textnormal{SOS polynomials $s_i(x_i)$ for $i=1,\ldots,n$}\\
\nonumber &\quad\textnormal{SOS $n$-vectors $\sigma^{i}(x)$ for $i=1,\ldots,n$}\\
\nonumber &\textnormal{Such that:}\\
\nonumber &\quad (\theta_i(x_i)-\epsilon)-s_i(x_i)d_i(x_i) \quad &&\hspace*{-.5in}\textnormal{is a SOS polynomial}\\
  \label{eq:97-3}&&&\hspace*{-.5in}\textnormal{for $i=1,\ldots,n$}\\
\nonumber &\quad -(\theta(x)^TJ(x)+\dot{\theta}(x)^T)_i-\sigma^i(x)^Td(x) \\
\nonumber & &&\hspace*{-.5in}\textnormal{is a SOS polynomial}\\
  \label{eq:97-4}& &&\hspace*{-.5in}\textnormal{for $i=1,\ldots,n$}\\
  \label{eq:97-2}&\quad -\theta(x^*)^TJ(x^*)-\epsilon\geq 0
\end{alignat}
}
where $(\theta(x)^TJ(x)+\dot{\theta}(x)^T)_i$ denotes the $i$-th entry of $\theta(x)^TJ(x)+\dot{\theta}(x)^T$.
\end{prop}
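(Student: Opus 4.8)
The plan is to show that the three sum-of-squares conditions \eqref{eq:97-3}, \eqref{eq:97-4}, \eqref{eq:97-2} are each a valid SOS constraint in the sense of \eqref{eq:96}--\eqref{eq:96-2}, so that the collection constitutes a genuine SOS feasibility problem, and then to argue that any feasible tuple $(\theta_i, s_i, \sigma^i)$ yields functions $\theta_i$ meeting the hypotheses of Theorem~\ref{thm:tvmain1}. The first half is essentially a bookkeeping check: each displayed expression is an affine combination of the decision polynomials $\theta_i(x_i)$ (entering linearly, possibly through $\dot\theta(x)$, which by \eqref{eq:163} is linear in the $\theta_i'$ and hence linear in the coefficients of $\theta_i$) with fixed polynomial coefficients built from the data $f_i(x)$, $J(x)=\partial f/\partial x$, $d_i(x_i)$, and the constant $\epsilon$, together with products of the SOS multipliers $s_i$, $\sigma^i_j$ against the fixed polynomials $d_i(x_i)$; this is exactly the template \eqref{eq:96}. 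The constraint \eqref{eq:97-2} is a linear inequality in the coefficients of the $\theta_i$ evaluated at the fixed point $x^*$, which can be absorbed as a (degree-zero) SOS constraint on a scalar, or simply noted to define a polyhedral — hence convex — condition; fixing the degrees of $\theta_i$, $s_i$, $\sigma^i$ then gives a finite-dimensional SDP.

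For the second half, I would start from a feasible solution. Condition \eqref{eq:97-3} says $(\theta_i(x_i)-\epsilon) - s_i(x_i)d_i(x_i)$ is SOS, hence nonnegative on all of $\mathbb{R}$; for $x_i\in\X_i$ we have $d_i(x_i)\ge 0$ and $s_i(x_i)\ge 0$, so $\theta_i(x_i)\ge \epsilon + s_i(x_i)d_i(x_i)\ge \epsilon$, giving $\theta_i(x_i)\ge c$ on $\X_i$ with $c:=\epsilon>0$, and continuous differentiability is automatic since $\theta_i$ is a polynomial. Condition \eqref{eq:97-4}, by the same positivity-on-$\X$ argument applied coordinatewise (using $\sigma^i(x)^Td(x) = \sum_j \sigma^i_j(x)d_j(x_j)\ge 0$ on $\X$), gives $-(\theta(x)^TJ(x)+\dot\theta(x)^T)_i\ge 0$ for every $i$ and every $x\in\X$, i.e. $\theta(x)^TJ(x)+\dot\theta(x)^T\le 0$ on $\X$, which is precisely \eqref{eq:111}. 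Finally \eqref{eq:97-2} rearranges to $\theta(x^*)^TJ(x^*)\le -\epsilon\mathbf{1}^T<0$, which implies \eqref{eq:114}. Thus all hypotheses of Theorem~\ref{thm:tvmain1} are met and the $\theta_i$ produce the separable Lyapunov functions \eqref{eq:113} and \eqref{eq:113-2}.

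I do not expect a serious obstacle here; the statement is a soundness claim (feasibility implies the hypotheses), not a completeness claim, so there is no need to discuss conservatism of the SOS relaxation, degree bounds, or the Positivstellensatz giving exactness. The one point that warrants a line of care is the handling of the \emph{equality} constraint \eqref{eq:97-2} within the SOS framework: strictly, a linear equality/inequality in the decision variables is not of the form ``$\cdots$ is SOS'' unless one allows the trivial case of constant polynomials, so I would either state explicitly that a scalar constant $\gamma\ge 0$ is SOS iff $\gamma\ge 0$ (so \eqref{eq:97-2} is the SOS constraint ``$-\theta(x^*)^TJ(x^*)-\epsilon$ is a nonnegative constant''), or remark that \eqref{eq:97-2} is a linear constraint that is retained alongside the SOS constraints and preserves convexity. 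A second minor subtlety is ensuring $d_i$ is not identically zero causes no problem: if $a_i=-\infty$ and $b_i=\infty$ then $d_i\equiv 0$ and \eqref{eq:97-3} reduces to requiring $\theta_i(x_i)-\epsilon$ itself SOS, which still forces $\theta_i\ge\epsilon$ — so the argument goes through unchanged. With these remarks in place the proof is a short verification.
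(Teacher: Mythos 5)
Your proposal is correct and follows essentially the same route as the paper: verify the constraints are affine in the decision variables (with \eqref{eq:97-2} kept as a linear constraint), then use nonnegativity of SOS polynomials together with $d(x)\geq 0$ and nonnegativity of the multipliers $s_i$, $\sigma^i$ on $\X$ to conclude $\theta_i(x_i)\geq\epsilon$ and \eqref{eq:111}, with \eqref{eq:97-2} giving \eqref{eq:114}. Your additional remarks on the linear constraint and the degenerate case $d_i\equiv 0$ are consistent with, and slightly more careful than, the paper's treatment.
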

\begin{proof}
Note that $(\theta(x)^TJ(x)+\dot{\theta}(x)^T)_i$ is a polynomial in $x$ for which the decision variables of the SOS feasibility problem %
appear linearly. In addition, \eqref{eq:97-2} is a linear constraint on the coefficients of $\theta(x)$. Thus, \eqref{eq:97-3}--\eqref{eq:97-2} is a well-defined SOS feasibility problem.

Next, we claim that \eqref{eq:97-3} is sufficient for $\theta(x)\geq \epsilon \mathbf{1}$ for all $x\in\Domain$. To  prove the claim, suppose \eqref{eq:97-3} is satisfied so that $(\theta_i(x_i)-\epsilon)-s_i(x_i)d_i(x_i)$ is a SOS polynomial and consider $x\in \Domain$ so that $d(x)\geq 0$. Since $s_i(x_i)$ is a SOS polynomial for all $i$, we have that also $s_i(x_i)d(x_i)\geq 0$ so that $\theta_i(x_i)\geq \epsilon + s_i(x_i)d(x_i)\geq \epsilon$.

A similar argument shows that if \eqref{eq:97-4} holds for all $i$, then condition (1) of Theorem \ref{thm:tvmain1} holds, that is, \eqref{eq:111} holds for all $x\in \Domain$. Indeed, suppose $-(\theta(x)^TJ(x)+\dot{\theta}(x)^T)_i-\sigma^i(x)^Td(x)$ is a SOS polynomial and consider $x\in\Domain$ so that $d(x)\geq 0$. Since $\sigma^i(x)$ is a SOS $n$-vector, $\sigma^i(x)^Td(x)\geq 0$ and thus $-(\theta(x)^TJ(x)+\dot{\theta}(x)^T)_i\geq 0$.

Finally, \eqref{eq:97-2} implies \eqref{eq:114}.
\end{proof}

The technique employed in \eqref{eq:97-3} and \eqref{eq:97-4} to ensure that the conditions of Theorem \ref{thm:tvmain1} hold whenever $d(x)\geq 0$ is similar to the $\mathcal{S}$-procedure used to express constraints on quadratic forms as linear matrix inequalities \cite [p. 23]{Boyd:1994uq} and is common in applications of SOS programming to systems and control theory.

\begin{remark}
  A symmetric proposition holds establishing a SOS feasibility program sufficient for computing $\omega(x)$ that satisfies the conditions of Theorem \ref{thm:tvmain2}. We omit an explicit form for this SOS program due to space constraints.
\end{remark}

\addtocounter{example}{-2}

\begin{example}[Cont.]
  We consider the system from Example \ref{ex:statedep} and seek to compute a sum-separable Lyapunov function using the SOS program of Section \ref{sec:an-algor-comp}. We let $\epsilon=0.01$ and search for $\theta_1(x_1)$ and $\theta_2(x_2)$ that are polynomials of up to second-order. We consider SOS polynomials that are 0-th order, that is, all SOS polynomials are considered to be positive constants, which proves sufficient for this example. The SOS program requires 0.26 seconds of computation time and returns
  \begin{align}
    \label{eq:98}
    \theta_1(x)&=1.7429\\
     \theta_2(x)&=x_2^2 + 1.3793x_2 + 1.9503
  \end{align}
where parameters have been scaled so that the leading coefficient of $\theta_2(x_2)$ is $1$ since scaling $\theta(x)$ does not affect the validity of the resulting SOS program or the conditions of Theorem \ref{thm:tvmain1}. Then, \eqref{eq:113} and \eqref{eq:113-2} give the following sum-separable Lyapunov functions:
\begin{align}
  \label{eq:44}
  V(x)&=1.7429x_1+\frac{1}{3}x_2^3+\frac{1.3793}{2}x_2^2+1.9503 x_2\\
V(x)&=1.7429 \left|{-x_1}+x_2^2\right| + x_2^3 + 1.3793x_2^2 + 1.9503 x_2.
\end{align}
$\blacksquare$
\end{example}
\addtocounter{example}{2}

\section{Applications}
\label{sec:examples}
In this section, we present several applications of our main results. First, we establish a technical result that will be useful for constructing Lyapunov functions as the limit of a sequence of contraction metrics.

\begin{prop}
\label{prop:limit}
  Let $x^*\in \Domain$ be an equilibrium for \eqref{eq:1}.  Suppose there exists a sequence of global Lyapunov functions $V^i:\Domain\to\mathbb{R}_{\geq 0}$ for \eqref{eq:1} that converges locally uniformly to $V(x):=\lim_{i\to\infty}V^i(x)$. If $V(x)$ is positive definite and globally proper (see Definition \ref{def:lyap}) then $V(x)$ is also a global Lyapunov function for \eqref{eq:1}.

\end{prop}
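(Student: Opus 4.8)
The plan is to verify the three conditions defining a global Lyapunov function (Definition~\ref{def:lyap}) directly for $V$. Conditions (i) and (ii)---global properness and positive definiteness of $V$---are assumed in the hypotheses, and $V$ is continuous because it is a locally uniform limit of the continuous functions $V^i$. Hence the only substantive task is condition (iii): for each $x_0\in\Domain$ with $x_0\ne x^*$, to produce $\tau>0$ with $V(\phi(\tau,x_0))<V(x_0)$ and $V(\phi(t,x_0))\le V(x_0)$ for all $t\in(0,\tau]$.

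I would split this into two ingredients. First, since each $V^i$ is itself a global Lyapunov function, $x^*$ is globally asymptotically stable (as used already in the proof of Corollary~\ref{cor:nonexpeq}), so $\phi(t,x_0)\to x^*$ as $t\to\infty$ for every $x_0\in\Domain$; by continuity $V(\phi(t,x_0))\to V(x^*)=0$, while positive definiteness gives $V(x_0)>0$ for $x_0\ne x^*$, so there is certainly some $\tau>0$ with $V(\phi(\tau,x_0))<V(x_0)$. Second, each $V^i$ is non-increasing along trajectories of~\eqref{eq:1}: condition (iii) for $V^i$ states precisely that the continuous function $t\mapsto V^i(\phi(t,x_0))$ does not increase on a small initial interval, and the same holds when the trajectory is restarted at any $\phi(s,x_0)$---trivially so if $\phi(s,x_0)=x^*$, since then $V^i$ stays at $0$---and a maximal-interval (supremum) argument on $[t_1,t_2]$, knitting these local estimates together using continuity, upgrades this to $V^i(\phi(t_2,x_0))\le V^i(\phi(t_1,x_0))$ whenever $t_1<t_2$. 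Taking the pointwise limit, $t\mapsto V(\phi(t,x_0))$ is then non-increasing as well.

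Combining the two ingredients finishes condition (iii): for $x_0\ne x^*$, choose $\tau>0$ with $V(\phi(\tau,x_0))<V(x_0)$ from the first ingredient; monotonicity from the second then gives $V(\phi(t,x_0))\le V(x_0)$ for all $t\in(0,\tau]$. Thus $V$ satisfies (i)--(iii) and is a global Lyapunov function.

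I expect the main obstacle to be the second ingredient---turning the purely \emph{eventual}, local-in-time decrease guaranteed by condition (iii) into genuine monotonicity of $t\mapsto V^i(\phi(t,x_0))$---which requires invoking (iii) at every restart time and stitching the estimates together, with attention to trajectories that reach $x^*$. By contrast, the usual worry when passing to a limit, namely that strict inequalities degrade to non-strict ones, does not bite here: the strict decrease for $V$ is recovered afresh from global attractivity of $x^*$ rather than inherited from the $V^i$, and only the non-strict, monotone part of condition (iii) is transferred through the limit.
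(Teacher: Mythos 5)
Your proof is correct and follows essentially the same route as the paper: deduce global asymptotic stability of $x^*$ from the existence of a global Lyapunov function, pass the non-increase of each $V^i$ along trajectories to the limit $V$, and recover the strict decrease from $V(\phi(t,x_0))\to 0$ together with positive definiteness. The only difference is that you spell out the supremum/knitting argument upgrading condition (iii) to genuine monotonicity of $t\mapsto V^i(\phi(t,x_0))$, a step the paper simply asserts.
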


\begin{proof}
Note that $x^*$ is globally asymptotic stable since there exists a global Lyapunov function, and consider  some $x_0\in \Domain$. Asymptotic stability of $x^*$ implies there exists a bounded set $\Omega$ for which $\phi(t,x_0)\in\Omega\subseteq \Domain$ for all $t\geq 0$. For $i=1,\ldots,n$, we have $V^i(\phi(t,x_0))$ is nonincreasing in $t$ and $\lim_{t\to\infty}V^i(\phi(t,x_0))=0$. Local uniform convergence establishes $V(x)$ is continuous, $V(\phi(t,x_0))$ is nonincreasing in $t$, and $\lim_{t\to\infty}V(\phi(t,x_0))=0$, and thus condition (3) of Definition \ref{def:lyap} holds. Under the additional hypotheses of the proposition, we have that $V(x)$ is therefore a global Lyapunov function.%
\end{proof}

Note that a sequence $V^i(x)$ arising from a sequence of weighted contraction metrics, \emph{i.e.}, $V^i(x)=|P_i(x-x^*)|$ or $V^i(x)=|P_if(x)|$ for $P_i$ converging to some nonsingular $P$, satisfies the conditions of Proposition \ref{prop:limit}.

The following example is inspired by \cite [Example 3]{Dirr:2015rt}.
\begin{example}[Comparison system]
  Consider the system 
  \begin{align}
    \label{eq:23}
    \dot{x}_1&=-x_1+x_1x_2\\
    \label{eq:23-2}\dot{x}_2&=-2x_2-x_2^2+\gamma(x_1)^2
  \end{align}
evolving on $\Domain=\mathbb{R}_{\geq 0}^2$ where $\gamma:\mathbb{R}_{\geq 0}\to\mathbb{R}_{\geq 0}$ is strictly increasing  and satisfies $\gamma(0)=0$, $\bar{\gamma}:=\lim_{\sigma\to\infty}\gamma(\sigma)<1$, and $  \gamma'(\sigma)\leq \frac{1}{(1+\sigma)^{2}}$.
Consider the change of coordinates $(\eta_1,\eta_2)=(\log(1+x_1),x_2)$ so that 
\begin{align}
  \label{eq:27}
  \dot{\eta}_1&=\frac{1}{1+x_1}(-x_1+x_1x_2)
\end{align}
where we substitute $(x_1,x_2)=(e^{\eta_1}-1,\eta_2)$. Then
\begin{align}
  \label{eq:28}
  \dot{\eta}_1\leq -\beta(e^{\eta_1}-1)+\eta_2
\end{align}
where $\beta(\sigma)={\sigma}/{(1+\sigma)}$. Introduce the comparison system 
\begin{align}
  \label{eq:29}
  \dot{\xi}_1&=-\beta(e^{\xi_1}-1)+\xi_2\\
  \label{eq:29-2}  \dot{\xi}_2&=-2\xi_2-\xi_2^2+\gamma(e^{\xi_1}-1)^2
\end{align}
evolving on $\mathbb{R}_{\geq 0}^2$. The comparison principle (see, \emph{e.g.}, \cite{Dirr:2015rt}) ensures that asymptotic stability of the origin for the comparison system \eqref{eq:29}--\eqref{eq:29-2} implies asymptotic stability of the origin of the $(\eta_1,\eta_2)$ system, which in turn establishes asymptotic stability of the origin for \eqref{eq:23}--\eqref{eq:23-2}. The Jacobian of \eqref{eq:29}--\eqref{eq:29-2} is given by
\begin{align}
  \label{eq:30}
  J(\xi)=
  \begin{pmatrix}
    -e^{\xi_1}\beta'(e^{\xi_1}-1)&1\\
2e^{\xi_1}\gamma(e^{\xi_1}-1)\gamma'(e^{\xi_1}-1)&-2-2\xi_2
  \end{pmatrix}
\end{align}
where $\beta'(\sigma)=\frac{1}{(1+\sigma)^2}$. Let $v=(2\bar{\gamma}+\epsilon,1)$ where $\epsilon$ is chosen small enough so that $c_1:=(2\bar{\gamma}+\epsilon-2)<0$. It follows that
\begin{align}
  \label{eq:14}
  v^TJ(\xi)\leq (-\epsilon e^{-\xi_1},c_1)< 0\quad \forall \xi.
\end{align}
Applying Corollary \ref{cor:1}, 
the origin of \eqref{eq:23}--\eqref{eq:23-2} and \eqref{eq:29}--\eqref{eq:29-2} is globally asymptotically stable. Furthermore, we have the following state and flow sum-separable Lyapunov functions for the comparison system \eqref{eq:29}--\eqref{eq:29-2}:
\begin{align}
  \label{eq:50}
 V_1(\xi)&=(2\bar{\gamma}+\epsilon)\xi_1+\xi_2{\text{, and}}  \\
V_2(\xi)&=(2\bar{\gamma}+\epsilon)|\dot{\xi}_1|+|\dot{\xi_2}|.
\end{align}
Above, we understand $\dot{\xi}_1$ and $\dot{\xi}_2$ to be shorthand for the equalities expressed in \eqref{eq:29}--\eqref{eq:29-2}.
\end{example}

\begin{example}[Multiagent system]
\label{ex:multiagent}
Consider the following system evolving on $\Domain=\mathbb{R}^3$:
\begin{align}
  \label{eq:31}
  \dot{x}_1&=-\alpha_1(x_1)+\rho_1(x_3-x_1)\\
\dot{x}_2&=\rho_2(x_1-x_2)+\rho_3(x_3-x_2)\\
  \label{eq:31-3}\dot{x}_3&=\rho_4(x_2-x_3)
\end{align}
where we assume $\alpha_1:\mathbb{R}\to\mathbb{R}$ is strictly increasing and satisfies $\alpha(0)=0$  and $\alpha_1'(\sigma)\geq \ul{c}_0$ for some $\ul{c}_0>0$ for all $\sigma$, and each $\rho_i:\mathbb{R}\to\mathbb{R}$ is strictly increasing and satisfies $\rho_i(0)=0$. Furthermore, for $i=1,3$, $\rho'_i(\sigma)\leq \ol{c}_i$ for some $\ol{c}_i>0$ for all $\sigma$, and for $i=2,4$, $\rho'_i(\sigma)\geq \ul{c}_i$ for some $\ol{c}_i>0$ for all $\sigma$. 

For example, $x_1$, $x_2$, and $x_3$ may be the position of three vehicles, for which the dynamics  \eqref{eq:31}--\eqref{eq:31-3}  are a rendezvous protocol whereby agent 1 moves towards agent 3 at a rate dependent on the distance $x_3-x_1$ as determined by $\rho_1$, \emph{etc.} Additionally, agent 1 navigates towards the origin according to $-\alpha_1(x_1)$. Computing the Jacobian, we obtain
\begin{align}
  \label{eq:32}
\nonumber&  J(x)=\\
  &\begin{pmatrix}
    -\alpha'(x_1)-\rho_1'(z_{31})&0&\rho_1'(z_{31})\\
\rho_2'(z_{12})&-\rho_2'(z_{12})-\rho_3'(z_{32})&\rho_3'(z_{32})\\
0&\rho'_4(z_{23})&-\rho'_4(z_{23})
  \end{pmatrix}
\end{align}
where $z_{ij}:= x_i-x_j$. Let $w=(1,1+\epsilon_1,1+\epsilon_1+\epsilon_2)^T$
where $\epsilon_1>0$ and $\epsilon_2>0$ are chosen to satisfy
\begin{align}
  \label{eq:33}
\ul{c}_0&>  (\epsilon_1+\epsilon_2)\ol{c}_1\quad \text{and}\quad \epsilon_1\ul{c}_2>\epsilon_2\ol{c}_3.%
\end{align}
We then have $  J(x)w\leq c\mathbf{1}$ for all $x$ 
for $c=\max\{(\epsilon_1+\epsilon_2) \ol{c}_1-\ul{c}_0 ,\epsilon_2\ol{c}_3-\epsilon_1\ul{c}_2,-\epsilon_2\ul{c}_4\}<0$. Thus, the origin of \eqref{eq:31}--\eqref{eq:31-3} is globally asymptotically stable by Corollary \ref{cor:2}. Furthermore,
\begin{align}
  \label{eq:37}
  V_1(x)&=\max\{|x_1|,(1+\epsilon_1)^{-1}|x_2|,(1+\epsilon_1+\epsilon_2) ^{-1}|x_3|\},\\
  \label{eq:37-2}   V_2(x)&=\max\{|\dot{x}_1|,(1+\epsilon_1) ^{-1}|\dot{x}_2|,(1+\epsilon_1+\epsilon_2) ^{-1}|\dot{x}_3|\}
\end{align}
are state and flow max-seperable Lyapunov functions where we interpret $\dot{x}_i$ as shorthand for the equalities expressed in \eqref{eq:31}--\eqref{eq:31-3}. %
Since we may take $\epsilon_1$ and $\epsilon_2$ arbitrarily small satisfying \eqref{eq:33},  using Proposition \ref{prop:limit} we have also the following choices for Lyapunov functions:
\begin{align}
  \label{eq:49}
  V_3(x)&=\max\{|x_1|,|x_2|,|x_3|\},\\
  \label{eq:49-2}   V_4(x)&=\max\{|\dot{x}_1|,|\dot{x}_2|,|\dot{x}_3|\} .
\end{align}

The flow max-separable Lyapunov functions \eqref{eq:37-2} and \eqref{eq:49-2} are particularly useful for multiagent vehicular networks where it often easier to measure each agent's velocity rather than absolute position.
\end{example}
In Example \ref{ex:multiagent}, choosing $w=\mathbf{1}$, we have $J(x)w\leq 0$ for all $x$, however this is not enough to establish asymptotic stability using Corollary \ref{cor:2}. Informally, choosing $w$ as in the example
distributes the extra negativity of $-\alpha'(x_1)$ among the columns of $J(x)$. Nonetheless, Proposition \ref{prop:limit} implies that choosing $w=\mathbf{1}$ indeed leads to a valid Lyapunov function.%

The above example generalizes to systems with many agents interacting via arbitrary directed graphs, as does the principle of distributing extra negativity  along diagonal entries of the Jacobian as discussed in Section \ref{sec:disc-comp-exist}.

\begin{example}[Traffic flow]
\label{ex:traffic}
A model of traffic flow along a freeway with no onramps is obtained by spatially partitioning the freeway into $n$ segments such that traffic flows from segment $i$ to $i+1$, $x_i\in[0,\bar{x}_i]$ is the density of vehicles occupying link $i$, and $\bar{x}_i$ is the capacity of link $i$. A fraction $\beta_i\in(0,1]$ of the flow out of link $i$ enters link $i+1$. The remaining $1-\beta_i$ fraction is assumed to exit the network via, \emph{e.g.}, unmodeled offramps. Associated with each link is a continuously differentiable \emph{demand} function $D_i:[0,\bar{x}_i]\to\mathbb{R}_{\geq 0}$ that is strictly increasing and satisfies $D_i(0)=0$, and a continuously differentiable \emph{supply} function $S_i:[0,\bar{x}_i]\to\mathbb{R}_{\geq 0}$ that is strictly decreasing and satisfies $S_i(\bar{x}_i)=0$. Flow from segment to segment is restricted by upstream demand and downstream supply, and the change in density of a link is governed by mass conservation:
\begin{align}
  \label{eq:38}
\dot{x}_1&= \min\{\delta_1,S_1(x_1)\}-\frac{1}{\beta_1}g_{1}(x_{1},x_{2})\\
  \dot{x}_i&= g_{i-1}(x_{i-1},x_i)-\frac{1}{\beta_i}g_{i}(x_{i},x_{i+1}), \quad i=2,\ldots,n-1\\
  \label{eq:38-3}\dot{x}_n&=g_{n-1}(x_{n-1},x_n)- D_n(x_n)
\end{align}
for some $\delta_1>0$ where, for $i=1,\ldots,n-1$,
\begin{align}
  \label{eq:39}
g_{i}(x_{i},x_{i+1})=\min\{\beta_iD_i(x_i),S_{i+1}(x_{i+1})\}.
\end{align}
Let $\delta_i:= \delta_1\prod_{j=1}^{i-1}\beta_j$ for $i=2,\ldots, n$. If $d^{-1}_i(\delta_i)<s^{-1}_i(\delta_i)$ for all $i$, then $\delta_1$ is said to be \emph{feasible} and $x^*_i:=d^{-1}_i(\delta_i)$ constitutes the unique equilibrium. 

\begin{figure*}
\begin{align}
  \label{eq:36}
  J(x)=
  \begin{pmatrix}
    \partial_1g_0-\frac{1}{\beta_1}\partial_1g_1&-\frac{1}{\beta_1}\partial_2g_1&0&0&\cdots&0\\
\partial_1 g_1&\partial_2g_1-\frac{1}{\beta_2}\partial_2g_2&-\frac{1}{\beta_2}\partial_3 g_2&0 &\cdots&0\\
0&\partial_2g_2&\partial_3g_2-\frac{1}{\beta_3}\partial_3g_3&-\frac{1}{\beta_3}\partial_4g_3&&0\\
\vdots&&&&\ddots&\vdots\\
0&0&\cdots&0&\partial_{n-1}g_{n-1}&\partial_{n}g_{n-1}-\partial_nD_n(x_n)
  \end{pmatrix}
\end{align}
\hrule
\end{figure*}

Let $\partial_i$ denote differentiation with respect to the $i$th component of $x$, that is, $\partial_ig(x):=\frac{\partial g}{\partial x_i}(x)$ for a function $g(x)$. The dynamics \eqref{eq:38}--\eqref{eq:38-3} define a system $\dot{x}=f(x)$ for which $f$ is continuous but only piecewise differentiable. Nonetheless, the results developed above apply for this case, and, in the sequel, we interpret statements involving derivatives to hold wherever the derivative exists.

Notice that $\partial_{i}g_i(x_i,x_{i+1})\geq 0$ and $\partial_{i+1}g_i(x_i,x_{i+1})\leq 0$. Define $g_0(x_1):=\min\{\delta_1,S_1(x_1)\} $. 
The Jacobian, where it exists, is given by \eqref{eq:36}, which is seen to be Metzler. Let
\begin{align}
  \label{eq:40}
  \tilde{v}=\begin{pmatrix}1,\beta_1^{-1},(\beta_1\beta_2)^{-1},\ldots,(\beta_1\beta_2\cdots\beta_{n-1})^{-1}\end{pmatrix}^T.
\end{align}
Then $\tilde{v}^TJ(x)\leq 0$ for all $x$. Moreover, there exists $\epsilon=(\epsilon_1,\epsilon_2,\ldots,\epsilon_{n-1},0)$ with $\epsilon_{i}>\epsilon_{i+1}$ for each $i$ such that $v:=\tilde{v}+\epsilon$ satisfies
\begin{align}
  \label{eq:41}
  v^TJ(x)&\leq 0\quad \forall x\\
  \label{eq:41-2} v^TJ(x^*)&<0.
\end{align}
Such a vector $\epsilon$ is constructed using a technique similar to that used in Example \ref{ex:multiagent}. In particular, the sum of the $n$th column of $\text{diag}(\tilde{v})J(x)$ is strictly negative because $-\partial_nD_n(x_n)<0$, and this excess negativity is used to construct $v$ such that \eqref{eq:41}--\eqref{eq:41-2} holds. A particular choice of $\epsilon$ such that \eqref{eq:41}--\eqref{eq:41-2} holds depends on bounds on the derivative of the demand functions $D_i$, but it is possible to choose $\epsilon$ arbitrarily small. Corollary~\ref{cor:1} establishes asymptotic stability, and Proposition \ref{prop:limit} gives the following sum-separable Lyapunov functions:
\begin{align}
  \label{eq:42}
  V_1(x)&=\sum_{i=1}^n \left(x_i\prod_{j=1}^{i-1}\beta_j\right),\\
  \label{eq:42-2}  V_2(x)&=\sum_{i=1}^n \left(|\dot{x}_i|\prod_{j=1}^{i-1}\beta_j\right),
\end{align}
where we interpret $\dot{x}_i$ according to \eqref{eq:38}--\eqref{eq:38-3}.

In traffic networks, it is often easier to measure traffic flow rather than traffic density. Thus, \eqref{eq:42-2} is a practical Lyapunov function indicating that the (weighted) total absolute net flow throughout the network decreases over time.
\end{example}
 In \cite{coogan2015compartmental}, a result similar to that of Example \ref{ex:traffic} is derived for possibly infeasible input flow and  traffic flow network topologies where merging junctions with multiple incoming links are allowed. The proof considers a flow sum-separable Lyapunov function similar to \eqref{eq:42-2} and appeals to LaSalle's invariance principle rather than Proposition \ref{prop:limit}.

\section{Discussion}
\label{sec:disc-comp-exist}
{For nonlinear monotone systems with an asymptotically stable equilibrium, it is shown in \cite{Dirr:2015rt} that max-separable Lyapunov functions can always be constructed in compact invariant subsets of the domain of attraction. Such a Lyapunov function is constructed by considering a single dominating trajectory that is (componentwise) greater than all points in this subset. If there exists a trajectory of the system that converges to the equilibrium in forward time and diverges to infinity in all components in backwards time, then this construction leads to a global Lyapunov function. It is also shown in \cite{Sootla:2016sp} that such max-separable Lyapunov functions can be obtained from the leading eigenfunction of the linear, but infinite dimensional, Koopman operator associated with the monotone system. 

In contrast, Theorem \ref{thm:tvmain2} and Corollary \ref{cor:2} above only provide sufficient conditions for constructing max-separable Lyapunov functions. However, these results offer an alternative construction than that suggested in \cite{Dirr:2015rt, Sootla:2016sp}.

In addition, \cite{Dirr:2015rt} provides counterexamples showing that an asymptotically stable monotone system need not admit sum-separable or max-separable Lyapunov functions globally \cite{Dirr:2015rt}.  An important remaining open question is whether monotone systems whose domain satisfies a compact invariance condition necessarily admit sum-separable Lyapunov functions. 


}
\subsection{Relationship to ISS small-gain conditions}
\label{sec:relat-iss-small}
In this section, we briefly discuss the relationship between the main results of this paper and small-gain conditions for interconnected input-to-state stable (ISS) systems. While there appears to be a number of interesting connections between the results presented here and the extensive literature on networks of ISS systems, see, \emph{e.g.},  \cite{Ito:2012ux,Ito:2013ez} for some recent results, a complete characterization of this relationship is outside the scope of this paper and will be the subject of future research.  Nonetheless, we highlight how Theorems \ref{thm:tvmain1} and \ref{thm:tvmain2} provide a Jacobian-based perspective to ISS system analysis.

Consider $N$ interconnected systems with dynamics $\dot{x}_i=f_i(x_1,\ldots,x_N)$ for $x_i\in\mathbb{R}^{n_i}$ and suppose each system satisfies an input-to-state stability (ISS) condition \cite{Sontag:1989fk} whereby there exists ISS Lyapunov functions $V_i$ \cite{Sontag:1995qf} satisfying
\begin{align}
  \label{eq:57}
  \frac{\partial V_i}{\partial x_i}(x_i)f_i(x)\leq -\alpha_i(V_i(x_i))+\sum_{i\neq j}\gamma_{ij}(V_j(x_j))
\end{align}
where each $\alpha_i$ and $\gamma_{ij}$ is a class $\Kinf$ function\footnote{A continuous function $\alpha:\mathbb{R}_{\geq 0}\to\mathbb{R}_{\geq 0}$ is of class $\Kinf$ if it is strictly increasing, $\alpha(0)=0$, and $\lim_{r\to\infty}\alpha(r)=\infty$.}. We obtain a monotone comparison system 
\begin{align}
  \label{eq:51}
\dot{\xi}=g(\xi), \qquad g_i(\xi)=-\alpha_i(\xi_i)+\sum_{j\neq i}\gamma_{ij}(\xi_j)
\end{align}
evolving on $\mathbb{R}^n_{\geq 0}$ for which asymptotic stability of the origin implies asymptotic stability of the original system \cite{Ruffer:2010tw}.

It is shown in \cite[Section 4.3]{Dashkovskiy:2011qv} that if $\gamma_{ij}(s)=k_{ij}h_j(s)$ and $\alpha_i(s)=a_ih_i(s)$ for some collection of constants $c_{ij}\geq0 $, $a_j>0$ and $\Kinf$ functions $h_i$ for all $i,j=1,\ldots,n$, and there exists a vector $v$ such that $v^T(-A+C)<0$ where $A=\text{diag}(a_1,\ldots,a_n)$ and $[C]_{ij}=c_{ij}$, then $V(x)=v^T\begin{bmatrix}V_1(x_1)&\ldots V_n(x_n)\end{bmatrix}^T$ is an ISS Lyapunov function for the composite system. Indeed, in this case, and considering the comparison system \eqref{eq:51}, we see that
\begin{align}
  \label{eq:86}
  \frac{\partial g}{\partial \xi}(\xi)= (-A+C)\text{diag}(h'_1(\xi_1),\cdots,h'_n(\xi_n))
\end{align}
where $h'_i(\xi_i)\geq 0$ so that, if $v^T(-A+C)<0$, then also $v^T\frac{\partial g}{\partial \xi}(\xi)\leq 0$ for all $\xi$. If also $h'_i(0)>0$ for all $i$, then $v^T\frac{\partial g}{\partial \xi}(0)< 0$ so that Corollary \ref{cor:1} 
implies the sum-separable Lyapunov function $v^T\xi$, providing a contraction theoretic interpretation of this result. The case of $N=2$ was first investigated in \cite{Jiang:1996dw} where it is assumed without loss of generality that $a_1=a_2=1$ and it is shown that if $c_{12}c_{21}<1$, then $v_1V_1(x_1)+v_2V(x_2)$ is a Lyapunov function for the original system for any $v=\begin{bmatrix}v_1& v_2\end{bmatrix}^T>0$ satisfying $v_1c_{12}<v_2$ and $v_2c_{21}<v_1$. These conditions are equivalent to $v^T(-I+C)<0$.

 Alternatively, in \cite{Ruffer:2010tw, Dashkovskiy:2010zh}, it is shown that if there exists a function $\rho:\mathbb{R}_{\geq 0}\to\mathbb{R}_{\geq 0}^n$ with each component $\rho_i$ belonging to class $\mathcal{K}_\infty$ such that $g(\rho(r))< 0$ for all $r>0$, then the origin is asymptotically stable and $V(\xi):=\max_i\{\rho_i^{-1}(\xi_i)\}$ is a Lyapunov function. If the conditions of Corollary \ref{cor:2} hold for the comparison system for some $w$, we may choose $\rho(r)=rw$. Indeed, we have
\begin{align}
  \label{eq:55}
  g(rw)=\int_0^1\frac{\partial g}{\partial \xi}(\sigma rw) rw\ d \sigma <0 \quad \forall r>0.
\end{align}
 For this case, $V(\xi)=\max_i\{\rho_i^{-1}(\xi_i)\}=\max_i\{\xi_i/w_i\}$, recovering \eqref{eq:19}.

\subsection{Generalized contraction and compartmental systems}
We now discuss the relationship between the results presented here and additional results for contractive systems in the literature. First, we comment on the relationship between Corollary \ref{cor:nonexpeq} of Theorem \ref{thm:2} as well as Proposition \ref{prop:limit} and a generalization of contraction theory recently developed in \cite{Sontag:2014eu, Margaliot:2015wd} where exponential contraction between any two trajectories is required only after an arbitrarily small amount of time, an arbitrarily small overshoot, or both. In \cite[Corollary 1]{Margaliot:2015wd}, it is shown that if a system is contractive with respect to a sequence of norms convergent to some norm, then the system is generalized contracting with respect to that norm, a result analogous to Proposition \ref{prop:limit}. In \cite{Margaliot:2015wd}, conditions on the sign structure of the Jacobian are obtained that ensure the existence of such a sequence of weighted $\ell_1$ or $\ell_\infty$ norms. These conditions are a generalization of the technique in Example \ref{ex:multiagent}
and Example \ref{ex:traffic} 
in Section \ref{sec:examples} where small $\epsilon$ is used to distribute excess negativity.  

Furthermore, it is shown in \cite{Margaliot:2012hc, Margaliot:2014qv} that a ribosome flow model for gene translation is monotone and nonexpansive with respect to a weighted $\ell_1$ norm, and additionally is contracting on a subset of its domain. %
Entrainment of solutions is proved by first showing that all trajectories reach the region of exponential contraction. Theorem \ref{thm:2} provides a different approach for studying entrainment by observing that the distance to the periodic trajectory strictly decreases in each period due to a neighborhood of contraction along the periodic trajectory.%

Finally, we note that Metzler matrices with nonpositive column sums have also been called \emph{compartmental} \cite{Jacquez:1993uq}. It has been shown that if the Jacobian matrix is compartmental for all $x$, then $V(x)=|f(x)|$ is a nonincreasing function along trajectories of \eqref{eq:1} \cite{Jacquez:1993uq, Maeda:1978fk}; Proposition \ref{thm:1} recovers this observation when considering \eqref{eq:71} with $c=0$, $\Theta(x)\equiv I$, and $|\wc|$ taken to be the $\ell_1$ norm.

\section{Conclusions}
\label{sec:conclusions}
We have investigated monotone systems that are also contracting with respect to a weighted $\ell_1$ norm or $\ell_\infty$ norm. In the case of the $\ell_1$ (respectively, $\ell_\infty$) norm, we provided a condition on the weighted column (respectively, row) sums of the Jacobian matrix for ensuring contraction. When the norm is state-dependent, these conditions include an additive term that is the time derivative of the weights. This construction leads to a pair of sum-separable (respectively, max-separable) Lyapunov functions. The first Lyapunov function pair is separable along the state of the system while the second is \emph{agent-separable}, that is, each constituent function depends on $f_i(x)$ in addition to $x_i$. When the weighted contractive norm is independent of state, the components of this Lyapunov function only depend on $f_i(x)$ and we say it is \emph{flow-separable}. Such flow separable Lyapunov functions are especially relevant in applications where it is easier to measure the derivative of the system's state rather than measure the state directly.

In addition, we provided a computational algorithm to search for separable Lyapunov functions using our main results and sum-of-squares programming, and we demonstrated our results through several examples. We further highlighted some connections to stability results for interconnected input-to-state stable systems. These connections appear to be a promising direction for future work. %


%

%

%

%
%

%
%

%
\bibliography{$HOME/Documents/Books/books}

\begin{thebibliography}{}

\bibitem[Angeli and Sontag, 2003]{Angeli:2003fv}
Angeli, D. and Sontag, E. (2003).
\newblock Monotone control systems.
\newblock {\em IEEE Transactions on Automatic Control}, 48(10):1684--1698.

\bibitem[Angeli and Sontag, 2004]{Angeli:2004qy}
Angeli, D. and Sontag, E. (2004).
\newblock Interconnections of monotone systems with steady-state
  characteristics.
\newblock In {\em Optimal control, stabilization and nonsmooth analysis}, pages
  135--154. Springer.

\bibitem[Bao et~al., 2012]{Bao:2012mg}
Bao, D., Chern, S.-S., and Shen, Z. (2012).
\newblock {\em An introduction to Riemann-Finsler geometry}, volume 200.
\newblock Springer Science \& Business Media.

\bibitem[Boyd et~al., 1994]{Boyd:1994uq}
Boyd, S., Ghaoui, L.~E., Feron, E., and Balakrishnan, V. (1994).
\newblock {\em Linear Matrix Inequalities in System and Control Theory}.
\newblock SIAM.

\bibitem[Clarke, 1990]{Clarke:1990kx}
Clarke, F.~H. (1990).
\newblock {\em Optimization and nonsmooth analysis}, volume~5.
\newblock Siam.

\bibitem[Como et~al., 2015]{Como:2015ne}
Como, G., Lovisari, E., and Savla, K. (2015).
\newblock Throughput optimality and overload behavior of dynamical flow
  networks under monotone distributed routing.
\newblock {\em IEEE Transactions on Control of Network Systems}, 2(1):57--67.

\bibitem[Coogan, 2016]{Coogan:2016kx}
Coogan, S. (2016).
\newblock Separability of {L}yapunov functions for contractive monotone
  systems.
\newblock In {\em IEEE Conference on Decision and Control}, pages 2184--2189.

\bibitem[Coogan and Arcak, 2014]{Coogan:2014ph}
Coogan, S. and Arcak, M. (2014).
\newblock Dynamical properties of a compartmental model for traffic networks.
\newblock In {\em Proceedings of the 2014 American Control Conference}, pages
  2511--2516.

\bibitem[Coogan and Arcak, 2015a]{coogan2015compartmental}
Coogan, S. and Arcak, M. (2015a).
\newblock A compartmental model for traffic networks and its dynamical
  behavior.
\newblock {\em IEEE Transactions on Automatic Control}, 60(10):2698--2703.

\bibitem[Coogan and Arcak, 2015b]{Coogan:2015dq}
Coogan, S. and Arcak, M. (2015b).
\newblock A dissipativity approach to safety verification for interconnected
  systems.
\newblock {\em IEEE Transactions on Automatic Control}, 60(6):1722--1727.

\bibitem[Coogan and Arcak, 2015c]{Coogan:2014ty}
Coogan, S. and Arcak, M. (2015c).
\newblock Efficient finite abstraction of mixed monotone systems.
\newblock In {\em Proceedings of the 18th International Conference on Hybrid
  Systems: Computation and Control}, pages 58--67.

\bibitem[Dashkovskiy et~al., 2011]{Dashkovskiy:2011qv}
Dashkovskiy, S., Ito, H., and Wirth, F. (2011).
\newblock On a small gain theorem for {ISS} networks in dissipative lyapunov
  form.
\newblock {\em European Journal of Control}, 17(4):357--365.

\bibitem[Dashkovskiy et~al., 2010]{Dashkovskiy:2010zh}
Dashkovskiy, S.~N., R\"{u}ffer, B.~S., and Wirth, F.~R. (2010).
\newblock Small gain theorems for large scale systems and construction of {ISS}
  {L}yapunov functions.
\newblock {\em SIAM Journal on Control and Optimization}, 48(6):4089--4118.

\bibitem[Desoer and Vidyasagar, 2008]{Desoer:2008bh}
Desoer, C. and Vidyasagar, M. (2008).
\newblock {\em Feedback systems: Input-output properties}.
\newblock Society for Industrial and Applied Mathematics.

\bibitem[Dirr et~al., 2015]{Dirr:2015rt}
Dirr, G., Ito, H., Rantzer, A., and R{\"u}ffer, B. (2015).
\newblock Separable {L}yapunov functions for monotone systems: constructions
  and limitations.
\newblock {\em Discrete Contin. Dyn. Syst. Ser. B}.

\bibitem[Forni and Sepulchre, 2014]{Forni:2012qe}
Forni, F. and Sepulchre, R. (2014).
\newblock A differential {L}yapunov framework for contraction analysis.
\newblock {\em IEEE Transactions on Automatic Control}, 59(3):614--628.

\bibitem[Gomes et~al., 2008]{Gomes:2008fk}
Gomes, G., Horowitz, R., Kurzhanskiy, A.~A., Varaiya, P., and Kwon, J. (2008).
\newblock Behavior of the cell transmission model and effectiveness of ramp
  metering.
\newblock {\em Transportation Research Part C: Emerging Technologies},
  16(4):485--513.

\bibitem[Hirsch, 1983]{Hirsch:1983lq}
Hirsch, M.~W. (1983).
\newblock Differential equations and convergence almost everywhere in strongly
  monotone semiflows.
\newblock {\em Contemporary Mathematics}, 17:267--285.

\bibitem[Hirsch, 1985]{Hirsch:1985fk}
Hirsch, M.~W. (1985).
\newblock Systems of differential equations that are competitive or cooperative
  {II}: Convergence almost everywhere.
\newblock {\em SIAM Journal on Mathematical Analysis}, 16(3):423--439.

\bibitem[Ito et~al., 2012]{Ito:2012ux}
Ito, H., Dashkovskiy, S., and Wirth, F. (2012).
\newblock Capability and limitation of max-and sum-type construction of
  {L}yapunov functions for networks of {iISS} systems.
\newblock {\em Automatica}, 48(6):1197--1204.

\bibitem[Ito et~al., 2013]{Ito:2013ez}
Ito, H., Jiang, Z.~P., Dashkovskiy, S.~N., and R{\"u}ffer, B.~S. (2013).
\newblock Robust stability of networks of {iISS} systems: Construction of
  sum-type lyapunov functions.
\newblock {\em IEEE Transactions on Automatic Control}, 58(5):1192--1207.

\bibitem[Jacquez and Simon, 1993]{Jacquez:1993uq}
Jacquez, J.~A. and Simon, C.~P. (1993).
\newblock Qualitative theory of compartmental systems.
\newblock {\em SIAM Review}, 35(1):43--79.

\bibitem[Jiang et~al., 1996]{Jiang:1996dw}
Jiang, Z.-P., Mareels, I.~M., and Wang, Y. (1996).
\newblock A {L}yapunov formulation of the nonlinear small-gain theorem for
  interconnected {ISS} systems.
\newblock {\em Automatica}, 32(8):1211 -- 1215.

\bibitem[Lohmiller and Slotine, 1998]{LOHMILLER:1998bf}
Lohmiller, W. and Slotine, J.-J.~E. (1998).
\newblock On contraction analysis for non-linear systems.
\newblock {\em Automatica}, 34(6):683--696.

\bibitem[Lovisari et~al., 2014a]{Lovisari:2014qv}
Lovisari, E., Como, G., Rantzer, A., and Savla, K. (2014a).
\newblock Stability analysis and control synthesis for dynamical transportation
  networks.
\newblock {\em arXiv:1410.5956}.

\bibitem[Lovisari et~al., 2014b]{Lovisari:2014yq}
Lovisari, E., Como, G., and Savla, K. (2014b).
\newblock Stability of monotone dynamical flow networks.
\newblock In {\em Proceedings of the 53rd Conference on Decision and Control},
  pages 2384--2389.

\bibitem[Maeda et~al., 1978]{Maeda:1978fk}
Maeda, H., Kodama, S., and Ohta, Y. (1978).
\newblock Asymptotic behavior of nonlinear compartmental systems:
  nonoscillation and stability.
\newblock {\em IEEE Transactions on Circuits and Systems}, 25(6):372--378.

\bibitem[Manchester and Slotine, 2017]{Manchester:2017wc}
Manchester, I.~R. and Slotine, J.-J.~E. (2017).
\newblock On existence of separable contraction metrics for monotone nonlinear
  systems.
\newblock {\em arxiv:1704.02676}.

\bibitem[Margaliot et~al., 2014]{Margaliot:2014qv}
Margaliot, M., Sontag, E.~D., and Tuller, T. (2014).
\newblock Entrainment to periodic initiation and transition rates in a
  computational model for gene translation.
\newblock {\em PloS one}, 9(5):e96039.

\bibitem[Margaliot et~al., 2016]{Margaliot:2015wd}
Margaliot, M., Sontag, E.~D., and Tuller, T. (2016).
\newblock Contraction after small transients.
\newblock {\em Automatica}, 67:178--184.

\bibitem[Margaliot and Tuller, 2012]{Margaliot:2012hc}
Margaliot, M. and Tuller, T. (2012).
\newblock Stability analysis of the ribosome flow model.
\newblock {\em IEEE/ACM Transactions on Computational Biology and
  Bioinformatics (TCBB)}, 9(5):1545--1552.

\bibitem[M.Vidyasagar, 2002]{Vidyasagar:2002ly}
M.Vidyasagar (2002).
\newblock {\em Nonlinear System Analysis}.
\newblock Society for Industrial and Applied Mathematics.

\bibitem[Papachristodoulou and Prajna, 2002]{Papachristodoulou:2002jk}
Papachristodoulou, A. and Prajna, S. (2002).
\newblock On the construction of {L}yapunov functions using the sum of squares
  decomposition.
\newblock In {\em Proceedings of the 41st IEEE Conference on Decision and
  Control}, volume~3, pages 3482--3487 vol.3.

\bibitem[Parrilo, 2000]{Parrilo:2000fk}
Parrilo, P. (2000).
\newblock {\em Structured Semidefinite Programs and Semialgebraic Geometry
  Methods in Robustness and Optimization}.
\newblock PhD thesis, California Institute of Technology.

\bibitem[Parrilo, 2003]{Parrilo:2001uq}
Parrilo, P.~A. (2003).
\newblock Semidefinite programming relaxations for semialgebraic problems.
\newblock {\em Mathematical Programming Ser. B}, 96(2):293--320.

\bibitem[Pavlov et~al., 2004]{Pavlov:2004lr}
Pavlov, A., Pogromsky, A., van~de Wouw, N., and Nijmeijer, H. (2004).
\newblock Convergent dynamics, a tribute to {B}oris {P}avlovich {D}emidovich.
\newblock {\em Systems \& Control Letters}, 52(3):257--261.

\bibitem[Prajna et~al., 2004]{sostools}
Prajna, S., Papachristodoulou, A., Seiler, P., and Parrilo, P.~A. (2004).
\newblock {\em {SOSTOOLS}: Sum of squares optimization toolbox for {MATLAB}}.

\bibitem[Rantzer, 2015]{Rantzer:2012fj}
Rantzer, A. (2015).
\newblock Scalable control of positive systems.
\newblock {\em European Journal of Control}, 24:72--80.

\bibitem[Rantzer et~al., 2013]{Rantzer:2013bf}
Rantzer, A., Ruffer, B., and Dirr, G. (2013).
\newblock Separable {L}yapunov functions for monotone systems.
\newblock In {\em Decision and Control (CDC), 2013 IEEE 52nd Annual Conference
  on}, pages 4590--4594.

\bibitem[Raveh et~al., 2016]{Raveh:2015wm}
Raveh, A., Margaliot, M., Sontag, E.~D., and Tuller, T. (2016).
\newblock A model for competition for ribosomes in the cell.
\newblock {\em Journal of The Royal Society Interface}, 13(116):20151062.

\bibitem[R\"{u}ffer et~al., 2010]{Ruffer:2010tw}
R\"{u}ffer, B.~S., Kellett, C.~M., and Weller, S.~R. (2010).
\newblock Connection between cooperative positive systems and integral
  input-to-state stability of large-scale systems.
\newblock {\em Automatica}, 46(6):1019--1027.

\bibitem[Smith, 1995]{Smith:2008fk}
Smith, H.~L. (1995).
\newblock {\em Monotone dynamical systems: {A}n introduction to the theory of
  competitive and cooperative systems}.
\newblock American Mathematical Society.

\bibitem[Sontag, 1989]{Sontag:1989fk}
Sontag, E. (1989).
\newblock Smooth stabilization implies coprime factorization.
\newblock {\em IEEE Transactions on Automatic Control}, 34(4):435--443.

\bibitem[Sontag, 1998]{Sontag:1998cr}
Sontag, E.~D. (1998).
\newblock {\em Mathematical Control Theory: Deterministic Finite Dimensional
  Systems}.
\newblock Springer, second edition.

\bibitem[Sontag, 2007]{Sontag:2007ad}
Sontag, E.~D. (2007).
\newblock Monotone and near-monotone biochemical networks.
\newblock {\em Systems and Synthetic Biology}, 1(2):59--87.

\bibitem[Sontag, 2010]{Sontag:2010fk}
Sontag, E.~D. (2010).
\newblock Contractive systems with inputs.
\newblock In {\em Perspectives in Mathematical System Theory, Control, and
  Signal Processing}, pages 217--228. Springer.

\bibitem[Sontag et~al., 2014]{Sontag:2014eu}
Sontag, E.~D., Margaliot, M., and Tuller, T. (2014).
\newblock On three generalizations of contraction.
\newblock In {\em IEEE 53rd Annual Conference on Decision and Control (CDC)},
  pages 1539--1544.

\bibitem[Sontag and Wang, 1995]{Sontag:1995qf}
Sontag, E.~D. and Wang, Y. (1995).
\newblock On characterizations of the input-to-state stability property.
\newblock {\em Systems \& Control Letters}, 24(5):351--359.

\bibitem[Sootla, 2016]{Sootla:2016sp}
Sootla, A. (2016).
\newblock Construction of max-separable {L}yapunov functions for monotone
  systems using the {K}oopman operator.
\newblock In {\em 2016 IEEE 55th Conference on Decision and Control (CDC)},
  pages 6512--6517.

\bibitem[Topcu et~al., 2007]{Topcu:2007fk}
Topcu, U., Packard, A., Seiler, P., and Wheeler, T. (2007).
\newblock Stability region analysis using simulations and sum-of-squares
  programming.
\newblock In {\em Proceedings of the American Control Conference}, pages
  6009--6014.

\end{thebibliography}
\end{document}